\documentclass[a4paper,11pt]{article}
\usepackage{jheppub} 
\usepackage{amsmath}
\usepackage{amssymb}
\usepackage{amsthm}
\usepackage{comment} 
\usepackage{tikz}
\usetikzlibrary{angles, quotes}
\usepackage{csquotes}
\usepackage{physics}
\usepackage{cancel}
\usepackage[normalem]{ulem}
\usepackage{booktabs} 
\newcommand{\be}{\begin{equation}}
\newcommand{\ee}{\end{equation}}

\newcommand{\F}{\mathcal{F}}

\newcommand{\res}{\mathop{\mathrm{Res}}}
\providecommand{\Tr}{}\renewcommand{\Tr}{\mathrm{Tr}} 

\newcommand{\pp}{p}
\newcommand{\qqq}{q}

\newtheorem{theorem}{Theorem}[section]
\newtheorem{proposition}[theorem]{Proposition}

\makeatletter
\providecommand{\@fpheader}{Prepared for submission to JHEP}
\makeatother

\title{An explicit four-corner dictionary for $(2,\pp)$
       minimal Liouville gravity}

\author[a]{A.~Alexandrov}
\author[b]{V.~Belavin}
\affiliation[a]{Center for Geometry and Physics, Institute for Basic Science (IBS), Pohang 37673, Korea}
\affiliation[b]{Department of Physics, Ariel University, Ariel 40700, Israel}
\emailAdd{alex@ibs.re.kr}
\emailAdd{vladimirbe@ariel.ac.il}

\abstract{ $(2,\pp)$ minimal Liouville gravity admits four algebraic descriptions interrelated by dualities. On the Frobenius manifold (FM) side the theory is built either on the $A_{1}$ manifold of $Q(y)=y^{2}+u_{1}$ ($y$-side) or on the $A_{\pp-1}$ manifold in the variable $x$ ($x$-side). On the spectral curve / topological recursion (SC) side the analogous choice is the Chebyshev curve and its $x\!\leftrightarrow\! y$ swap. FM-$y$, SC-$y$ require a resonance transformation between the KdV times and the Liouville couplings, while the other two do not. We work out the explicit correspondence between the four approaches at the level of the dispersionless tau-function.  The normalisation-independent three-point ratios agree with conformal field theory in all four formulations, and matching the full amplitudes fixes a single per-insertion factor reproducing the signed Verlinde matrices. We use the Kharchev--Marshakov integral transform of the basis functions, whose determinant gives an explicit relation between the corresponding tau-functions and realises the $x\!\leftrightarrow\! y$ swap at genus zero.  Once the independently fixed linear SC-$x$ normalisation is imposed, the Laurent deformation in the second coordinate gives an alternative genus-zero derivation of the known compact resonance transformation for the whole $(2,\pp)$ series, including all mixed coefficients.}

\begin{document}
\maketitle
\flushbottom

\section{Introduction}\label{sec:intro}

\subsection*{The (\textit{q,p}) minimal Liouville gravity}

$(q,p)$ minimal Liouville gravity (MLG) is one of the simplest exactly solvable models of two-dimensional quantum gravity coupled to matter \cite{Polyakov:1981rd,Knizhnik:1988ak}. The matter sector is the $(q,p)$ Virasoro minimal model with central charge $c_M = 1 - 6(p-q)^2/(pq)$, dressed by Liouville theory and coupled to the $bc$-ghost system so that the total central charge vanishes. The BRST cohomology contains a finite number of primaries $\mathcal{O}_{m,n}$ labelled by $1 \le m \le q-1$, $1 \le n \le p-1$, and the basic objects of interest are the integrated correlation numbers $\langle \mathcal{O}_{m_1,n_1} \cdots \mathcal{O}_{m_N,n_N} \rangle$, defined by the worldsheet integral over the moduli space of the $N$-punctured sphere.\footnote{In this paper we are focusing on the spherical topology.}

A central conjecture \cite{Knizhnik:1988ak,Douglas:1989dd,Brezin:1990rb} is that these correlation numbers admit a dual algebraic description in which the generating function is a tau-function of a Gelfand--Dikij integrable hierarchy, restricted to a finite-dimensional submanifold of times labelled by the same Kac indices $(m,n)$. Under the reflection $(m,n)\sim(q-m,p-n)$ there are $(q-1)(p-1)/2$ independent ones (for the Lee--Yang series, $q=2$, this number is $s=(\pp-1)/2$, matching the $s$ couplings $\tau_1,\dots,\tau_s$). Despite a long history, the dual description is not yet fully understood. In particular, several distinct dual algebraic formulations have been proposed in the literature, and the precise relations among them are not transparent.

The existing approaches can be depicted by a square diagram with four corners:
\begin{equation}\label{eq:fourcorners}
\begin{tikzpicture}[baseline=(current bounding box.center)]
\node (FMy) at (0,2.5) {\textbf{FM-$y$}: $A_{q-1}$ Frobenius manifold};
\node (FMx) at (8,2.5) {\textbf{FM-$x$}: $A_{p-1}$ Frobenius manifold};
\node (SCy) at (0,0)   {\textbf{SC-$y$}: standard spectral curve};
\node (SCx) at (8,0)   {\textbf{SC-$x$}: swapped spectral curve};
\draw[<->,thick] (FMy) -- node[above]{\small change of FM} (FMx);
\draw[<->,thick] (SCy) -- node[above]{\small $x \leftrightarrow y$ swap}        (SCx);
\draw[<->,thick] (FMy) -- node[left] {\small $\mathcal F^{\mathrm{FM}}=F_0^{\mathrm{SC}}$} (SCy);
\draw[<->,thick] (FMx) -- node[right]{\small $\mathcal F^{\mathrm{FM}}=F_0^{\mathrm{SC}}$} (SCx);
\end{tikzpicture}
\end{equation}

\noindent The vertical links in \eqref{eq:fourcorners} express the coincidence of two objects built by different means: the Frobenius prepotential $\mathcal F^{\mathrm{FM}}$ and the genus-zero part of the tau-function $F_0=\log\tau$ produced by the Chekhov--Eynard--Orantin topological recursion on the curve. The first follows from FM analysis, the second is the element of the spectral curve calculation. The fact that they coincide (tested against the MLG manifold in Section~\ref{sec:diagram}) makes the FM and SC descriptions dual.

\paragraph{Residue convention.}
Throughout the paper, $\res_{\xi=\infty}$ denotes the formal Gelfand--Dikii residue in the chosen large-$\xi$ coordinate:
\begin{equation}\label{eq:res-convention}
 \res_{\xi=\infty} f(\xi)\,d\xi
 :=
 [\xi^{-1}]f(\xi)
 =
 -\operatorname{Res}^{\mathrm{geom}}_{\xi=\infty}
 f(\xi)\,d\xi .
\end{equation}
At a finite point this agrees with the usual local residue.

\paragraph{FM-$y$: the standard Frobenius manifold description.}
The Douglas approach \cite{Douglas:1989dd} for $(q,p)$ model is based on the polynomial $Q(y) = y^{q} + u_1 y^{q-2} + \cdots + u_{q-1}$. The space of polynomials of this form carries the structure of the $A_{q-1}$ Frobenius manifold of Saito and Dubrovin \cite{Saito,Dubrovin}, with Milnor ring $\mathbb{C}[y]/Q'(y)$. A choice of primitive form fixes the prepotential and the action $S(u) = \res_{y=\infty}\bigl( Q^{(p+q)/q} + \sum_{(m,n)} t_{mn}\,Q^{(pm-qn)/q} \bigr)\,dy$, whose critical points determine the generating function of MLG correlators~\cite{Belavin:2013nba,Belavin:2014hsa,Belavin:2014xya,Belavin:2014cua}.  The dictionary between the KdV times $t_{mn}$ (which, actually, are the times of the higher Gelfand--Dikij hierarchies for $q>2$) and the Liouville coupling constants $\lambda_{mn}$ requires~\cite{Moore:1991ir,Belavin:2008kv} a resonance transformation
\begin{equation}\label{eq:res-transf}
t_{mn} = \lambda_{mn} + \sum A^{m_1 n_1,\dots}_{mn}\,
\lambda_{m_1 n_1}\cdots\,,
\end{equation}
whose coefficients are tuned so that the resulting correlation numbers satisfy the conformal selection rules.

\paragraph{FM-$x$: the no-resonance alternative.}
For the Lee--Yang series $q=2$, it was proposed in~\cite{BR} to trade the degree-two polynomial $Q(y)$ for a polynomial $P(x) = x^{\pp} + u_{1}x^{\pp-2} + \cdots + u_{\pp-1}$ of degree $\pp$ in the conjugate variable. Its parameter space carries the structure of the $A_{\pp-1}$ Frobenius manifold, with Milnor ring $\mathbb{C}[x]/\partial_{x}P(x)$, and the action becomes $\widetilde S(u;\lambda) = \res_{x=\infty}\bigl( P^{(\pp+2)/\pp} + \sum_{n=1}^{s}\lambda_{1n}\,P^{(\pp-2n)/\pp} \bigr)\,dx$, the fractional powers being chosen so that the correct spectrum of gravitational dimensions emerges. The Liouville couplings $\lambda_{1n}$ enter this action linearly, so no analogue of \eqref{eq:res-transf} is needed and the fusion algebra is reproduced directly. We call this description FM-$x$.

\paragraph{SC-$y$: topological recursion on the standard spectral curve.}
On the matrix model / topological recursion (TR) side, the $(q,p)$ MLG corresponds to the Chebyshev spectral curve, defined by two polynomial functions
\begin{equation}\label{eq:SC-y}
\quad x = 2 u_0\,T_q(z), \qquad y = 2u_0^{p/q}\,T_p(z),
\end{equation}
on $\mathbb P^1$ together with the Bergman kernel. Here $T_k$ are the Chebyshev polynomials of the first kind. The Chekhov--Eynard--Orantin topological recursion \cite{Eynard:2007kz,CEO} on \eqref{eq:SC-y} produces a family of $n$-differentials $\omega_{g,n}(z_1,\dots,z_n)$, and a residue prescription
\begin{equation}\label{eq:kpderiv}
\frac{\partial^n F_g^\mathrm{SC}}{\partial t_{k_1} \cdots
\partial t_{k_n}}\Bigl|_{t=t^0}
= \res_{z_i=\infty}\!
\left(\omega_{g,n}(z_1,\dots,z_n)
\prod_{i=1}^n \frac{x^{s-k_i+1/2}(z_i)}{2(s-k_i)+1}\right)
\end{equation}
gives the derivatives of the tau-function with respect to the natural KdV times $t_k$. To match with the worldsheet description, one then re-expresses the $t_k$ as polynomial functions of the worldsheet couplings $\tau_k$ via a resonance transformation, the SC-side counterpart of \eqref{eq:res-transf}. These $\tau_k$ are the same physical couplings as the $\lambda_{mn}$ of \eqref{eq:res-transf}, now in the spectral-curve normalisation, in which they are dimensionless and the MLG point sits at $\tau_1=-\tfrac12$. Their explicit linear relation to the $\lambda_{mn}$ is given in Appendix~\ref{sec:BDM-derivation}. We refer to this formulation, with the standard curve \eqref{eq:SC-y} plus the resonance transformation, as SC-$y$.

\paragraph{SC-$x$: the swapped spectral curve.}

Recently it was conjectured~\cite{Artemev} that the $x\!-\!y$ swapped spectral curve provides an equivalent description in which the natural KdV variables already coincide with the physical couplings.

The four formulations schematically indicated on the diagram
\eqref{eq:fourcorners} are conjecturally equivalent, although several of
these links remain to be established. The aim of this paper is to test this
picture explicitly in the lowest Lee--Yang models, to fix the precise
dictionary between the corners, and to express the $x\leftrightarrow y$ swap
through the exact Kharchev--Marshakov transform. We also show how its
determinant lift can be rewritten as a matrix integral and how the generalized
Kontsevich model arises as its degree-one specialization. The paper is
organized as follows. Section~\ref{sec:FM} reviews the two Frobenius manifold
formulations, Section~\ref{sec:SC} reviews the two spectral curve
formulations, Section~\ref{sec:diagram} compares them, and
Section~\ref{sec:konts} discusses the Kharchev--Marshakov transform and its
relation to the Kontsevich integral. Section~\ref{sec:outlook} contains our
concluding remarks.

\paragraph{Note added.} While this work was in progress, the
preprint~\cite{DSV2026} appeared, proving the conjecture~\cite{Artemev} that the topological recursion amplitudes obtained from the $x\!-\!y$  swapped spectral curve coincide\footnote{The equivalence holds up to an explicit normalisation factor and regular in $u_0^{2}$ terms. More specifically, according to~\cite{Artemev} the regular terms vanish on the dual side, which is not proven in~\cite{DSV2026}.} with the singular part of the amplitudes computed using the resonance transformation on the original Chebyshev curve. This establishes rigorously the main part of the SC-$x$\,$\leftrightarrow$\,SC-$y$ leg of the four-corner diagram \eqref{eq:fourcorners}, settling the equivalence of the two topological recursion descriptions of $(2,\pp)$ MLG.  For higher $q$ the status of this equivalence is still unclear.

\section{Two Frobenius manifold descriptions}
\label{sec:FM}

\subsection{The \texorpdfstring{$A_{q-1}$}{Aq-1} description (FM-\textit{y})}\label{sec:FMy}

The first Frobenius manifold formulation of $(q,p)$ MLG, originally due to Douglas \cite{Douglas:1989dd} and developed in \cite{Moore:1991ir,Belavin:2008kv,Belavin:2014hsa,Belavin:2014xya,Belavin:2014cua,Belavin:2013nba}, is built on the $A_{q-1}$ Frobenius manifold and requires a non-linear resonance transformation between the natural KdV times and the Liouville coupling constants.

The starting point is the polynomial
\begin{equation}\label{eq:Qy}
Q(y) = y^q + u_1\,y^{q-2} + u_2\,y^{q-3} + \cdots + u_{q-1}\,,
\end{equation}
viewed as a function on the $(q-1)$-dimensional space of parameters $u = (u_1,\dots,u_{q-1})$. The Milnor ring of $Q$ is
\begin{equation}\label{eq:milnor-y}
\mathcal{R}_y = \mathbb{C}[y] \big/ \bigl(\partial_y Q(y)\bigr),
\end{equation}
and has dimension $q-1$. The structure of an $A_{q-1}$ Frobenius manifold on the parameter space is encoded in \eqref{eq:milnor-y} together with the choice of residue pairing \cite{Saito,Dubrovin}:
\begin{equation}\label{eq:pairing-y}
\bigl\langle \phi_\alpha,\phi_\beta \bigr\rangle_y
\;=\; \res_{y=\infty} \frac{\phi_\alpha(y)\,\phi_\beta(y)}
{\partial_y Q(y)}\,dy\,.
\end{equation}
A canonical basis is given by the monomials $\phi_\alpha(y) = -\frac{q}{\alpha}\,\partial_{u_\alpha} Q(y)$, $\alpha=1,\dots,q-1$. The flat coordinates of \eqref{eq:pairing-y} will be denoted $v_\alpha$. They are related to the $u_\alpha$ by explicit polynomial change of variables which is well documented in \cite{Dubrovin}.

Following the generalised KdV formulation of \cite{Douglas:1989dd,Moore:1991ir}, we introduce the action
\begin{equation}\label{eq:Sy}
S(u; t) = \res_{y=\infty}
\Biggl(\, Q^{\frac{p+q}{q}}(y;u) \;+\;
\sum_{\substack{(m,n)\\ pm-qn>0}} t_{mn}\,Q^{\frac{pm-qn}{q}}(y;u)
\,\Biggr)\,dy\,.
\end{equation}
The string equation
\begin{equation}\label{eq:string-y}
\frac{\partial S}{\partial u_\alpha}(u_*; t) = 0
\qquad (\alpha=1,\dots,q-1)
\end{equation}
defines the solution $u_*=u_*(t)$ as a function of the KdV times $t_{mn}$. The string equation \eqref{eq:string-y} has many solutions and only one specific solution corresponds to the MLG background. The tau-function $\tau(t)$, which relates the prepotential to the underlying Gelfand–Dikij hierarchy variables, is subject of the following defining relation
\begin{equation}\label{eq:tau-def-y}
u_1(t) = \frac{\partial^2 {\F}_0}{\partial t_{0,1}^{\,2}}\,,
\end{equation}
formulated in terms of the genus-zero free energy ${\mathcal F}_0=\log \tau$. Here $t_{0,1}$ is the first KdV time of the dispersionless hierarchy, conjugate to $u_1$ and of smallest positive dimension. It is not one of the action couplings $t_{mn}$ of \eqref{eq:Sy} but the lowest time of the underlying hierarchy, and its identification within the spectral-curve times $t_k$ is worked out in Section~\ref{sec:diagram}, equation~\eqref{eq:KdV-id}. We note that this definition holds for any solution of the hierarchy, not only for the MLG background.

The generating function of MLG correlation numbers on the sphere coincides with the free energy, $Z(t)=\F_0(t)$. With the string equation \eqref{eq:string-y} satisfied at $u=u_*$, the generating function  in the FM-$y$ formulation can be written \cite{Belavin:2013nba,Belavin:2014hsa} as
\begin{equation}\label{eq:Z-FMy}
Z(t)
\;=\;\frac{1}{2}\int_0^{v_*}
\mathcal{C}_{\alpha\beta\gamma}(v)\,
\frac{\partial S}{\partial v_\beta}\,
\frac{\partial S}{\partial v_\gamma}\,
dv^\alpha\,,
\end{equation}
where $\mathcal{C}_{\alpha\beta\gamma}$ are the structure constants of the $A_{q-1}$ Frobenius algebra in the flat coordinates, and the integral is over a contour from the origin to the solution of the string equation $v_*=v_*(t)$. In general, the choice of the flat coordinates is most convenient for practical calculations, however for Lee--Yang series $q=2$, the $A_{q-1}$ manifold is one-dimensional, and the corresponding multi-variable contour integral reduces to a single dimension.

\paragraph{Resonance transformation.}
The match between the FM-$y$ times $t_{mn}$ in \eqref{eq:Sy} and the worldsheet Liouville coupling constants $\lambda_{mn}$ requires the non-linear transformation
\begin{equation}\label{eq:res-FMy}
t_{mn} = \lambda_{mn}
\;+\; \sum_{m_1 n_1, m_2 n_2}
A^{m_1 n_1, m_2 n_2}_{mn}\,\lambda_{m_1 n_1}\lambda_{m_2 n_2}
\;+\;\cdots
\end{equation}
whose terms are constrained by the KPZ resonance condition
\begin{equation}\label{eq:rescond}
\delta_{mn} = \delta_{m_1 n_1} + \cdots + \delta_{m_k n_k}\,,
\end{equation}
with $\delta_{mn} = (p+q-|pm-qn|)/(2q)$. The coefficients $A^{\cdots}_{mn}$ are fixed by demanding that the resulting correlation numbers satisfy the conformal selection rules of the $(q,p)$ matter sector~\cite{Belavin:2013nba}.

\paragraph{Universal and non-universal correlators.}
Correlation numbers in the partition function are classified by their gravitational dimension in powers of the cosmological constant $\mu$. The scaling fixes the dimension of an $n$-point correlator on the sphere as
\begin{equation}
Z_{k_1\dots k_n} \,\sim\, \mu^{\,(p+2-\sum_i(k_i+2))/2},
\end{equation}
so that for $(2,p)$ this exponent is a half-integer for generic $\sum_i k_i$. The correlators with non-negative integer $\mu$-power are polynomial contributions which do not belong to the universal singular part of $\F$ and must be discarded. More specifically, $Z_{k_1\dots k_n}$ is non-universal iff $p+2-\sum_i(k_i+2)$ is a non-negative even integer.

This condition plays the same role in both FM formulations: non-universal correlators can be non-zero in the naive calculation but are removed before comparison with continuous Liouville gravity.\footnote{It is important to keep the universality condition in mind on both sides: FM / SC. It is one of two mechanisms (the other being the resonance transformation) that together enforce the conformal selection rules.} Examples are worked out in Appendices~\ref{app:25-explicit} and~\ref{app:27-explicit}.

\subsection{The \texorpdfstring{$A_{p-1}$}{Ap-1} description (FM-\textit{x})}\label{sec:FMx}

The proposal of~\cite{BR}, formulated for the Lee--Yang series $(2,p)$, is to trade the polynomial in $y$ of degree $q=2$ in \eqref{eq:Qy} for a polynomial in $x$ of degree $p$:
\begin{equation}\label{eq:Px}
P(x) = x^p + u_1\,x^{p-2} + u_2\,x^{p-3} + \cdots + u_{p-1}\,,
\end{equation}
viewed as a function on the $(p-1)$-dimensional parameter space $(u_1,\dots,u_{p-1})$. The Milnor ring is $\mathcal{R}_x = \mathbb{C}[x]/(\partial_x P(x))$, and the corresponding Frobenius manifold is $A_{p-1}$. The pairing and flat coordinates are defined exactly as in \eqref{eq:pairing-y}, with $Q$ replaced by $P$ and $y$ by $x$, and we denote the flat coordinates on $A_{p-1}$ by $\tilde v_\alpha$, $\alpha=1,\dots,p-1$.

In order to reproduce the spectrum of gravitational dimensions for $(q=2,p)$ MLG, following \cite{BR} we introduce a modified action:
\begin{equation}\label{eq:Sx-BR}
\widetilde S(u; \lambda) \;=\;
\res_{x=\infty}\!\Biggl(\,
P^{\frac{\pp+2}{\pp}}(x;u)
+ \sum_{n=1}^{s} \lambda_{1n}\,P^{\frac{\pp-2n}{\pp}}(x;u)
\,\Biggr)\,dx\,,
\end{equation}
where the fractional powers are chosen so that the gravitational dimensions emerge correctly
\begin{equation}
\delta_{1,n} = \frac{p + 2 - |p - 2 n|}{4}\;.
\end{equation}

The genus-zero MLG generating function in the FM-$x$ formulation is then
\begin{equation}\label{eq:Z-FMx}
\widetilde Z(\lambda)
\;=\;\frac{1}{2}\int_0^{\tilde v_*}
\widetilde{\mathcal{C}}_{\alpha\beta\gamma}(\tilde v)\,
\frac{\partial \widetilde S}{\partial \tilde v_\beta}\,
\frac{\partial \widetilde S}{\partial \tilde v_\gamma}\,
d\tilde v^\alpha\,,
\end{equation}
where now $\tilde v_*$ is the critical point of $\widetilde S$ in the $(p-1)$-dimensional flat coordinate space of $A_{p-1}$, and $\widetilde{\mathcal{C}}_{\alpha\beta\gamma}$ is the $A_{p-1}$ Frobenius structure constants in flat coordinates.

The construction has two key features. First, no resonance transformation is needed. The dictionary between the coupling constants of $(2,p)$ MLG and natural variables of $A_{p-1}$ in FM-$x$ is linear. Our computations show that the FM-$x$ polynomial $P(x)$ of \eqref{eq:Px}, written in flat coordinates and restricted to the MLG slice, is the Chebyshev polynomial $2T_{\pp}(x/2)$ (Appendix~\ref{app:25-FMx}, footnote~\ref{fn:cheb-norm}). Second, the Verlinde fusion rules for the reflection-reduced $\frac{p-1}{2}$-dimensional sector reproduces the minimal-model fusion structure. This is impossible in the FM-$y$ description for the Lee--Yang series, where $\mathcal{R}_y = \mathbb{C}[y]/(2y)$ is one-dimensional.

\section{Two spectral curve descriptions: standard and swapped}\label{sec:SC}

In this section we focus on the Chekhov--Eynard--Orantin topological recursion on the Chebyshev spectral curve \eqref{eq:SC-y} and its dual. We use parallel notation for the SC-$y$ and SC-$x$ formulations, using the conventions of \cite{Artemev}.

\subsection{SC-$y$: the Chebyshev curve}\label{sec:SCy-def}

Below $\pp$ denotes the Chebyshev degree of $y$ on the Chebyshev spectral curve and $\qqq$ the Chebyshev degree of $x$. For the Lee--Yang series $\qqq=2$, and $\pp$ coincides with the second Kac label of the $(2,\pp)$ model. Throughout we use \begin{equation*} s = \frac{\pp - 1}{2}, \qquad \pp = 2 s + 1, \end{equation*} so $\pp=5, 7, 11$ correspond to the $(2,5), (2,7), (2,11)$ models with $s=2, 3, 5$. There are $s$ couplings $\tau_1, \ldots, \tau_{s}$ and an equal number of KdV times $t_1, \ldots, t_{s}$.

The Chebyshev spectral curve for $(2,\pp)$ Lee--Yang MLG is
\begin{equation}\label{eq:SCy-defn}
x(z) = 2 u_{0}\,T_{\qqq}(z) = 4 u_{0}\,z^{2} - 2 u_{0},
\qquad
y(z) = 2 u_{0}^{\pp/2}\,T_{\pp}(z)\,.
\end{equation}
For instance, at $\pp=7$ ($s=3$): \begin{equation*} y(z) = 2 u_{0}^{7/2}(64 z^{7} - 112 z^{5} + 56 z^{3} - 7 z)\,. \end{equation*} The map $z \mapsto x$ is two-to-one with a single ramification point at $z=0$. The Galois involution is $\sigma:\; z \mapsto -z$. The differential $dx = 8 u_{0}\,z\,dz$ vanishes only at $z=0$. The input (non-stable) differentials of the TR on the Chebyshev spectral curve \eqref{eq:SCy-defn} are given by
\begin{equation}\label{eq:SCy-input}
\omega_{0,1}(z) \;=\; y(z)\,dx(z),
\qquad
\omega_{0,2}(z_{1},z_{2}) \;=\; \frac{dz_{1}\,dz_{2}}{(z_{1}-z_{2})^{2}}\,.
\end{equation}
On $\mathbb{P}^{1}$ the Bergman bidifferential $\omega_{0,2}$ with the specified pole structure is unique. The kernel of the TR at $z=0$ takes the explicit form
\begin{equation}\label{eq:SCy-kernel}
K(z_{0},z) = \frac{dz_{0}}{32\,u_{0}^{(\pp+2)/2}\,z\,P(z^{2})\,
(z_{0}^{2}-z^{2})\,dz}\,,
\end{equation}
where $P(z^{2}) = y(z)/\bigl(2u_{0}^{\pp/2}\,z\bigr)$. The differentials $\omega_{0,n}$ with $n\ge 3$ are generated from the inputs \eqref{eq:SCy-input} and the kernel \eqref{eq:SCy-kernel} by the genus-zero Chekhov--Eynard--Orantin topological recursion \cite{Eynard:2007kz,CEO},
\begin{equation}\label{eq:EO-recursion}
\omega_{0,n+1}(z_{0},z_{1},\dots,z_{n})
=\Res_{z=0}\,K(z_{0},z)\!\!
\sum_{\substack{I\sqcup I'=\{1,\dots,n\}\\[1pt]\mathrm{stable}}}\!\!
\omega_{0,|I|+1}(z,z_{I})\,
\omega_{0,|I'|+1}(\sigma(z),z_{I'}),
\end{equation}
where $\sigma(z)=-z$ and ``stable'' excludes the two terms containing $\omega_{0,1}$. In particular, $\omega_{0,3}$ is given by a single application of \eqref{eq:EO-recursion} to $\omega_{0,2}(z,z_{1})\,\omega_{0,2}(\sigma(z),z_{2})$ plus its $z_{1}\!\leftrightarrow\! z_{2}$ image, giving $\omega_{0,3}\propto (z_{1}z_{2}z_{3})^{-2}$. With the family of test functions
\begin{equation}
\phi_{k}(z) = x^{s-k+1/2}(z)\big/(2(s-k)+1)\,,
\end{equation}
the derivatives \eqref{eq:kpderiv}  of the genus-zero free energy with respect to the SC-$y$ KdV times are
\begin{equation}\label{eq:SCy-kpderiv}
\frac{\partial^{n} F_{0}^{\mathrm{SC}}}
{\partial t_{k_{1}}\cdots\partial t_{k_{n}}}\bigg|_{t = t^{0}}
=\prod_{i=1}^n \mathop{\mathrm{Res}}_{z_{i}=\infty}
\Biggl(
\omega_{0,n}(z_{1},\ldots,z_{n})
\prod_{i=1}^{n}\phi_{k_{i}}(z_{i})
\Biggr)\,.
\end{equation}
The derivatives are calculated at the point given by
\begin{equation}
 t^0_{k} = - \frac{1}{2}  \underset{z= \infty}{\text{Res}}\;  \left( x^{k-s-1/2} y\,dx \right)
 \;.\label{kdvdef}
\end{equation}

The residue pairing \eqref{eq:SCy-kpderiv} produces the bare topological recursion derivatives in the KdV times $t_{k}$. Matching them to the minimal gravity normalisation requires a single overall rescaling of the test functions,
\begin{equation}\label{eq:phi-rescale}
\phi_{k}(z)\;\longmapsto\;\varkappa\,\phi_{k}(z),
\end{equation}
which, by multilinearity of \eqref{eq:SCy-kpderiv} in the $n$ insertions, multiplies every $n$-point amplitude by $\varkappa^{\,n}$:
\begin{equation}\label{eq:kappa-n}
\frac{\partial^{n}F_{0}^{\mathrm{SC}}}
     {\partial t_{k_{1}}\cdots\partial t_{k_{n}}}
\;\longmapsto\;
\varkappa^{\,n}\,
\frac{\partial^{n} F_{0}^{\mathrm{SC}}}
     {\partial t_{k_{1}}\cdots\partial t_{k_{n}}}\,.
\end{equation}
Thus a single per-insertion constant, not an independent factor at each order, controls the whole tower. The constant $\varkappa$ is the per-insertion part of the topological recursion/minimal gravity normalisation $2^{n}\pp^{n}$ of \cite{Artemev,DSV2026}: the factor $\pp^{n}$ is already carried by the test functions and the resonance map \eqref{eq:taudef}, leaving, for the Lee--Yang series,
\begin{equation}\label{eq:kappa-value}
\;\varkappa=-2\,.
\end{equation}
Its magnitude is the $2^{n}$ of \cite{DSV2026}. Its sign is the orientation of $\sqrt{x}$ in the test function $\phi_{k}$. Equivalently (this is the spectral curve analogue of how the resonance is fixed on the Frobenius manifold side \cite{Belavin:2013nba}) $\varkappa$ is determined by the requirement that the fusion-forbidden three-point amplitudes vanish. We verify this for $(2,5)$ and $(2,7)$ in Appendices~\ref{app:25-explicit}--\ref{app:27-explicit}. Moreover, rescaling $x$ (or, equivalently, $y$) by a constant $\varkappa^{-1}$ rescales every topological recursion differential $\omega_{g,n}$ by $\varkappa^{2g-2+n}$, so the factor in \eqref{eq:phi-rescale} can alternatively be absorbed into the curve, with a compensating rescaling of the genus expansion parameter $\hbar$.

Finally, the genus-zero free energy $\mathcal Z_{0}$, reconstructed by integrating the cosmological tadpole, inherits the magnitude $|\varkappa|$ but is taken with the positive sign of the sphere partition function.

\paragraph{Resonance transformation.}
The dictionary between the KdV times $t_{k}$ in \eqref{eq:SCy-kpderiv} and the Liouville couplings $\tau_{k}$ used in the worldsheet construction is the resonance transformation in the form of~\cite{Artemev}:
\begin{equation}\label{eq:taudef}
t_{k} = \pp\,u_{0}^{k+1}
\sum_{n=1}^{\lfloor (k+1)/2 \rfloor}
\sum_{\substack{m_{1},\dots,m_{n}\ge 1 \\
\sum_{l}(m_{l}+1)=k+1}}
\frac{\tau_{m_{1}}\cdots\tau_{m_{n}}}{n!}\,
\frac{(2s-2k+2n-3)!!}{(2s-2k-1)!!},
\end{equation}
The MLG background is
\begin{equation}\label{eq:MLGbg}
\tau_{1} = -\tfrac{1}{2}\,,\qquad \tau_{i>1} = 0\,.
\end{equation}
In the worked-out examples it corresponds to, at $(2,5)$ with $s=2$,
\begin{equation*}
t_{1} = 5\,u_{0}^{2}\,\tau_{1}\,,\qquad
t_{2} = 5\,u_{0}^{3}\,\tau_{2}\,,
\end{equation*}
with background $(t_{1}^{0},t_{2}^{0}) = (-\tfrac{5}{2}u_{0}^{2},\;0)$, and at $(2,7)$ with $s=3$,
\begin{align*}
t_{1} &= 7\,u_{0}^{2}\,\tau_{1}\,,\\
t_{2} &= 7\,u_{0}^{3}\,\tau_{2}\,,\\
t_{3} &= 7\,u_{0}^{4}\,\bigl(\tau_{3} + \tau_{1}^{2}/2\bigr)\,,
\end{align*}
with background $(t_{1}^{0},t_{2}^{0},t_{3}^{0}) = (-\tfrac{7}{2}u_{0}^{2},\;0,\;\tfrac{7}{8}u_{0}^{4})$. At higher $p$ this construction gives the resonance shifts listed in \eqref{eq:res-shifts}.

In all cases the background values of the $t_{k}$ from \eqref{eq:taudef} coincide with the residue formula \eqref{kdvdef} for $t_{k}^{0}$ applied directly at the Chebyshev background.

\paragraph{Physical amplitudes.}
The genus-zero correlation numbers follow from the $t$-basis derivatives by the resonance transform \eqref{eq:taudef}, the chain rule
\begin{align}\label{eq:chain-rule}
\frac{\partial^{3} F_{0}}
{\partial\tau_{k_{1}}\partial\tau_{k_{2}}\partial\tau_{k_{3}}}
&=\sum_{m_{1}m_{2}m_{3}}Z_{123}^{t}[m_{1},m_{2},m_{3}]\,
\tfrac{\partial t_{m_{1}}}{\partial\tau_{k_{1}}}
\tfrac{\partial t_{m_{2}}}{\partial\tau_{k_{2}}}
\tfrac{\partial t_{m_{3}}}{\partial\tau_{k_{3}}}\nonumber\\
&\;+\sum_{m_{1}m_{2}}Z_{12}^{t}[m_{1},m_{2}]
\Bigl[\tfrac{\partial t_{m_{1}}}{\partial\tau_{k_{1}}}
\tfrac{\partial^{2}t_{m_{2}}}{\partial\tau_{k_{2}}\partial\tau_{k_{3}}}
+\text{cyclic}\Bigr]
+\sum_{m}Z_{1}^{t}[m]\,
\tfrac{\partial^{3}t_{m}}{\partial\tau_{k_{1}}\partial\tau_{k_{2}}\partial\tau_{k_{3}}},
\end{align}
the normalisation \eqref{eq:kappa-value}, and the singular projection, giving
\begin{equation}\label{eq:agndef}
A^{0}_{3}(k_{1},k_{2},k_{3})
=\varkappa^{3}\,u_{0}^{-(k_{1}+1)-(k_{2}+1)-(k_{3}+1)}\,
\frac{\partial^{3} F_{0}}
{\partial\tau_{k_{1}}\partial\tau_{k_{2}}\partial\tau_{k_{3}}}\Big|_{\mathrm{sing}},
\end{equation}
where ``sing'' keeps the part with negative or odd-positive power of $u_{0}$.

\subsection{SC-$x$: the swapped Chebyshev curve}\label{sec:SCx-def}

The $x-y$ swap of TR exchanges the roles of $x$ and $y$. The swapped Chebyshev curve is
\begin{equation}\label{eq:SCx-defn}
\check x(z) = 2 u_{0}^{\pp/2}\,T_{\pp}(z),
\qquad
\check y(z) = 2 u_{0}\,T_{\qqq}(z) = 4 u_{0}\,z^{2} - 2 u_{0}\,.
\end{equation}
The two curves \eqref{eq:SCy-defn} and \eqref{eq:SCx-defn} can be described by the same algebraic curve in $\mathbb{C}^{2}$. The difference is in which coordinate is taken as the local parameter for the residue prescription. The map $z \mapsto \check x$ is now of degree $\pp$ with $\pp-1$ ramification points
\begin{equation}\label{eq:SCx-ram}
\zeta_{m} = \cos\frac{\pi m}{\pp}\,,
\qquad m = 1,\ldots,\pp-1\,,
\end{equation}
which are the zeros of $T_{\pp}'(z) = \pp\, U_{\pp-1}(z)$, where $U_n$ is the Chebyshev polynomial of the second kind. Near each $\zeta_{m}$ the local Galois involution $\bar z^{(m)}(z)$ is well defined, and the Chekhov--Eynard--Orantin recursion is given by a sum of the residues in the $\pp-1$ ramifications. The input differentials are
\begin{equation}\label{eq:SCx-input}
\check\omega_{0,1}(z) \;=\; \check y(z)\,d\check x(z),
\qquad
\check\omega_{0,2}(z_{1},z_{2}) \;=\; \frac{dz_{1}\,dz_{2}}{(z_{1}-z_{2})^{2}}\,.
\end{equation}
and the recursion kernel, near each ramification $\zeta_{m}$ and suppressing the index $m$ on the right-hand side for brevity, is
\begin{equation}\label{eq:SCx-kernel}
\check K^{(m)}(z_{0},z) \,\frac{dz}{dz_{0}}
= \frac{1}{8\,u_{0}^{s+3/2}\,\pp\,U_{2s}(z)}\,
\frac{1}{(z + \bar z^{(m)})\,(z_{0}-z)\,(z_{0}-\bar z^{(m)})}\,.
\end{equation}

The KdV times of SC-$x$ are defined by a residue formula at $z=\infty$ with $\check x$ raised to a fractional power. Using the leading large-$z$ behaviour $T_{n}(z)=2^{n-1}z^{n}+O(z^{n-2})$ of the Chebyshev polynomials, for $k\geq p-1$ one has
\begin{equation}
T_{\pp}^{(2k+1)/\pp}(z)
= 2^{2(s-k)/\pp}\,T_{2k+1}(z) + O(z^{-1})\,,\qquad z\to\infty\,.
\end{equation}
Explicitly:
\begin{equation}\label{eq:SCx-kpderiv}
\frac{\partial^{n}\check{F}_{0}^{\mathrm{SC}}}
{\partial \check t_{k_{1}}\cdots\partial \check t_{k_{n}}}\bigg|_{\check t = \check t^{0}}
=\prod_{i=1}^n \mathop{\mathrm{Res}}_{z_{i}=\infty}
\Biggl(
\check\omega_{0,n}(z_{1},\ldots,z_{n})
\prod_{i=1}^{n}\frac{\check x^{(\pp-2k_{i})/\pp}(z_{i})}
                    {\pp-2k_{i}}
\Biggr)\,.
\end{equation}

Note that \eqref{eq:SCx-kpderiv} describes $\check{F}_{0}^{\mathrm{SC}}$ as a dispersionless solution of the KP hierarchy in the variables $\check t_k$, therefore both SC-$y$ and SC-$x$ have natural integrability interpretations \cite{ABDKSKP}.

A feature of the $x-y$ swapped formulation \cite{Artemev} is that, at the MLG background, only the cosmological time $\check t_{1}$ is switched on. There is no need for a non-linear substitution $\check t = \check t(\tau)$: the natural KdV times of SC-$x$ realise the worldsheet couplings $\tau$ by the linear rescaling
\begin{equation}\label{eq:SCx-linear-dictionary}
\check t_{m} \;=\; -\,\pp\,u_{0}^{m+1}\,\tau_{m},
\qquad m = 1,\dots,s .
\end{equation}
The $u_{0}$ weight follows from $\check y\sim u_{0}$ and $\check x\sim u_{0}^{\pp/2}$ in the residue defining $\check t_{m}$, and at the MLG background $\tau_{1}=-\tfrac12$ the rescaling gives $\check t_{1}^{0}=\pp\,u_{0}^{2}/2$, which is the value found in Appendix~\ref{app:25-explicit}. This is the SC-side mirror of the ``no-resonance'' feature of FM-$x$ described in Section~\ref{sec:FMx}.

The systems of differentials $\{\omega_{g,n}\}$ and $\{\check\omega_{g,n}\}$ for two $x-y$ swapped spectral curves are related by explicit $x-y$ swap transformations developed in \cite{BCGLS,Hock,Alexandrov:2022ydc}. The $x-y$ swap was suggested as a spectral curve description of the $p-q$ duality of MLG in \cite[Section 6]{AMMP}. For the Chebyshev spectral curve \eqref{eq:SCy-defn} the transformation is described explicitly in~\cite{DSV2026}.

\section{The four-corner diagram and the tau-function relations}\label{sec:diagram}

Each formulation produces the same finite-dimensional genus-zero generating function
\begin{equation}\label{eq:tau-restriction}
F_{0}(\tau_{1},\ldots,\tau_{s})\,,
\end{equation}
denoted $\mathcal F^{\mathrm{FM}}$ in the Frobenius descriptions and $F_{0}^{\mathrm{SC}}$ in the spectral curve ones. Equivalence means these agree, order by order in the Taylor expansion about the MLG background \eqref{eq:MLGbg}, after the appropriate identification of variables. The comparison can be performed on four increasingly stringent levels.

First, at the background level, the MLG background $\tau_{1}=-1/2, \tau_{i>1}=0$ corresponds to the same critical point of the four constructions. This is essentially the geometric input $u_{1}^{*}=-2u_{0}$ at the Chebyshev ramification, and it is immediate in all four formulations. Second, the defining tau relation
\begin{equation}\label{eq:taudef-statement}
u_{1}(\tau) = \frac{\partial^{2}F_{0}}
{\partial t_{0,1}^{\,2}}
\end{equation}
of the dispersionless Lee--Yang hierarchy holds in all four formulations with the same $u_{1}(\tau)$ and the same first KdV time $t_{0,1}$, modulo a single conventional factor. This is the deepest structural statement of the four. Third, the normalisation-independent three-point ratios
\begin{equation}\label{eq:R-def-main}
R[k_{1},k_{2},k_{3}]
=\frac{\bigl(Z_{123}[k_{1},k_{2},k_{3}]\bigr)^{2}\,Z_{0}}
       {Z_{12}[k_{1},k_{1}]\,Z_{12}[k_{2},k_{2}]\,Z_{12}[k_{3},k_{3}]}
\end{equation}
coincide in all four formulations and equal the CFT value \cite{Belavin:2013nba,Belavin:2008kv}
\begin{equation}\label{eq:R-CFT}
R^{\text{CFT}}(n_{1},n_{2},n_{3})
=
\frac{(\pp-2n_{1})(\pp-2n_{2})(\pp-2n_{3})}
     {(\pp-2)\,\pp\,(\pp+2)}\cdot
|N_{n_{1}n_{2}n_{3}}|\,,
\end{equation}
where $N_{n_{1}n_{2}n_{3}}$ is the Verlinde fusion number of $M(2,\pp)$. On the spectral curve corners the same ratio is built from the normalised amplitudes, with $A^{0}_{3}$, $A^{0}_{2}$ and $A_{0}$ in place of $Z_{123}$, $Z_{12}$ and $Z_{0}$. Being a ratio in which all normalisation factors cancel, this level carries no resonance- or normalisation-tuning ambiguity, which makes it the cleanest of the four tests. 

For $(2,5)$ and $(2,7)$, the two- and three-point amplitudes agree
in all four formulations after including the spectral-curve
normalisation $\varkappa$ in \eqref{eq:kappa-value}. The unsquared
three-point amplitudes also reproduce the signed Verlinde matrices.
The explicit calculations are given in
Appendices~\ref{app:25-explicit}--\ref{app:27-explicit}, and the
universal ratios are collected in Table~\ref{tab:R-main}.
\begin{table}[h]
\centering
\renewcommand{\arraystretch}{1.3}
\begin{tabular}{lll}
\toprule
model & fusion-allowed $(k_{1},k_{2},k_{3})$ & $R=R^{\rm CFT}$ (all four corners)\\
\midrule
$(2,5)$ & $(1,1,1),(1,2,2),(2,2,2)$ & $\tfrac{9}{35},\ \tfrac{1}{35},\ \tfrac{1}{105}$\\
$(2,7)$ & $(1,1,1),(1,2,2),(1,3,3),(2,2,3),(2,3,3),(3,3,3)$ & $\tfrac{25}{63},\tfrac17,\tfrac1{63},\tfrac1{35},\tfrac1{105},\tfrac1{315}$\\
\bottomrule
\end{tabular}
\caption{Universal three-point ratios \eqref{eq:R-def-main} at the MLG
background. See
Appendices~\ref{app:25-explicit}--\ref{app:27-explicit}.}
\label{tab:R-main}
\end{table}

\subsection{The defining tau relation in SC-$y$ at MLG background}

The defining relation of the Lee--Yang tau-function is \eqref{eq:taudef-statement}: the lowest coefficient $u_{1}$ of the Douglas polynomial $Q(y)=y^{2}+u_{1}$ equals the second derivative of $F_{0}$ with respect to the first KdV time of the dispersionless hierarchy. To use this on SC-$y$, we first need to identify which of the SC-$y$ times $t_{1},\ldots,t_{s}$ plays the role of $t_{0,1}$.

The KPZ dimension count fixes the answer. In the dispersionless KdV literature $t_{0,1}$ is conjugate to $u_{1}$, so $\dim t_{0,1}\cdot \dim u_{1}$ matches a top form. With $\dim u_{1}= u_{0}$ this means $t_{0,1}$ has the smallest positive dimension in the hierarchy. In our SC-$y$ convention, however, $t_{1}$ is the cosmological constant of dimension $u_{0}^{2}$ and the dimension increases with the index: $\dim t_{k}=u_{0}^{k+1}$. The highest-index time $t_{s}$ has the dimension that pairs naturally with $u_{1}$. Hence the identification
\begin{equation}\label{eq:KdV-id}
t_{0,1}^{\text{KdV}} \;\leftrightarrow\;
t_{s}^{\text{SC-}y}\,.
\end{equation}
The same identification follows from \eqref{eq:SCy-kpderiv} and the KP integrability of TR \cite{ABDKSKP}. The proposition below is stated in terms of $t_{s}$ and is independent of \eqref{eq:KdV-id}, which enters only in reading it as the defining tau relation.

The defining relation \eqref{eq:taudef-statement}, evaluated at the MLG background and under the identification \eqref{eq:KdV-id}, reads
\begin{equation}\label{eq:taudef-MLG-bg}
u_{1}^{*}
= c \cdot \frac{\partial^{2}F_{0}}
{\partial t_{s}^{\,2}}\bigg|_{\tau=\tau^{0}}
= c \cdot Z_{12}^{t}[s,s]\bigg|_{\tau=\tau^{0}},
\end{equation}
with $u_{1}^{*}=-2u_{0}$ (the geometric Chebyshev ramification value) and $c$ a conventional constant. The content of the proposition below is not the value of $c$, which is simply the quotient of the two sides, but that both sides are independent of $s$, so that one constant serves the whole series.

\begin{proposition}[Defining relation at MLG background]
\label{prop:taudef} At the MLG background of $(2,\pp)$ Lee--Yang gravity, the relation
\begin{equation}\label{eq:taudef-claim}
u_{1}^{*} = 2 \cdot
\frac{\partial^{2}F_{0}}
{\partial t_{s}^{\,2}}\bigg|_{\tau=\tau^{0}}
\end{equation}
holds for every $s$, with $u_{1}^{*}=-2u_{0}$ and $\partial^{2}F_{0}/\partial t_{s}^{\,2}|_{\tau^{0}} = -u_{0}$.
\end{proposition}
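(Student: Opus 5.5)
The plan is to compute $\partial^{2}F_{0}/\partial t_{s}^{2}$ directly from the residue prescription \eqref{eq:SCy-kpderiv} with $n=2$. The only input needed is the Bergman kernel $\omega_{0,2}$ of \eqref{eq:SCy-input}, which does not depend on the background. Since the derivative is taken in the KdV time $t_{s}$ itself, no chain rule through the resonance map \eqref{eq:taudef} is needed, and the normalisation $\varkappa$ of \eqref{eq:phi-rescale} is not included. The key observation is that for $k=s$ the test function reduces to $\phi_{s}(z)=x^{1/2}(z)/1$, independently of $s$. Because $x(z)=4u_{0}z^{2}-2u_{0}$ does not involve $\pp$, the whole computation is manifestly $s$-independent. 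Once $\partial^{2}F_{0}/\partial t_{s}^{2}=-u_{0}$ is established, the identity \eqref{eq:taudef-claim} follows at once from $u_{1}^{*}=-2u_{0}$.

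The steps, in order, are as follows.
\begin{enumerate}
\item Expand the test function at large $z$:
\[
\phi_{s}(z)=\sqrt{x(z)}=2\sqrt{u_{0}}\,z\Bigl(1-\tfrac{1}{2z^{2}}\Bigr)^{1/2}=2\sqrt{u_{0}}\Bigl(z-\tfrac{1}{4z}-\tfrac{1}{32z^{3}}-\cdots\Bigr).
\]
The branch of $\sqrt{x}$ is the one fixed by the sign convention discussed after \eqref{eq:kappa-value}.
\item Evaluate the double residue at infinity of $\frac{dz_{1}dz_{2}}{(z_{1}-z_{2})^{2}}f(z_{1})g(z_{2})$. Following the standard argument, take $|z_{1}|>|z_{2}|$ and expand $(z_{1}-z_{2})^{-2}=\sum_{j\ge1}j\,z_{2}^{j-1}z_{1}^{-j-1}$. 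In the convention \eqref{eq:res-convention}, the double residue is then the bilinear pairing $\sum_{j\ge1}j\,f_{j}\,g_{-j}$, where $f_{j}$ denotes the coefficient of $z^{j}$. Only the positive-power part of $f$ contributes, paired against the negative-power part of $g$.
\item For $f=g=\phi_{s}$, only the pair $j=1$ survives, since the positive part of $\phi_{s}$ is just $2\sqrt{u_{0}}\,z$. This gives $1\cdot 2\sqrt{u_{0}}\cdot(-\tfrac{1}{4}\cdot 2\sqrt{u_{0}})=-u_{0}$.
\item Combining with $u_{1}^{*}=-2u_{0}$, which is the coefficient in $Q(y)=y^{2}+u_{1}$ matching $x=2u_{0}T_{2}(z)$ at the ramification, gives the factor $2$.
\end{enumerate}

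The main obstacle is bookkeeping of conventions rather than any conceptual difficulty. Three points need care:
\begin{itemize}
\item the ordering of the two residues and the region of expansion of $(z_{1}-z_{2})^{-2}$, which can flip the sign or produce a symmetrised factor;
\item the sign relation $\res=-\operatorname{Res}^{\mathrm{geom}}$ applied twice, which cancels;
\item the branch of $\sqrt{x}$, which also appears squared and so cancels.
\end{itemize}
I would first fix these on the simplest check, namely that a pairing of two functions whose positive parts both vanish gives zero, and that the result is symmetric in $f\leftrightarrow g$.

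One conceptual point should also be addressed: that $\partial^{2}F_{0}/\partial t_{s}^{2}$ computed via $\omega_{0,2}$ is genuinely the second derivative at the background. The prescription \eqref{eq:SCy-kpderiv} states exactly this, so we only invoke it.

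Finally, I would remark that for $s=2$ and $s=3$ the value $-u_{0}$ agrees with the entries $Z^{t}_{12}[s,s]$ obtained in Appendices~\ref{app:25-explicit}--\ref{app:27-explicit}, as a consistency check on the sign conventions.
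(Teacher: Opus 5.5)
Your proposal is correct and is essentially the paper's proof: $\phi_s=x^{1/2}$ and $\omega_{0,2}$ are independent of $\pp$, the expansion in the region $|z_1|>|z_2|$ selects $2u_0^{1/2}\cdot(-\tfrac12 u_0^{1/2})=-u_0$ with the branch sign entering squared, and $u_1^*=x(0)=-2u_0$ supplies the factor $2$. One minor caution about your proposed sanity check: for a fixed expansion region the pairing $\sum_{j\ge1}j f_j g_{-j}$ is not symmetric under $f\leftrightarrow g$ in general, since the two orders differ by $\pm\res f\,dg$ (e.g.\ $f=z$, $g=1/z$), so independence of the region here rests on $f=g$, which is exactly the symmetry the paper invokes.
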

\begin{proof}
At $k=s$ the test function is $\phi_{s}=x^{1/2}$, so the residue formula \eqref{eq:SCy-kpderiv} gives
\begin{equation*}
\frac{\partial^{2}F_{0}}{\partial t_{s}^{\,2}}\bigg|_{t=t^{0}}
=\res_{z_{1},z_{2}=\infty}
\Bigl(\omega_{0,2}(z_{1},z_{2})\,x^{1/2}(z_{1})\,x^{1/2}(z_{2})\Bigr),
\end{equation*}
whose integrand depends on $\pp$ through neither the Bergman kernel, nor $x(z)=4u_{0}z^{2}-2u_{0}$, nor $\phi_{s}$; its value is therefore the same for all $s$. The explicit evaluation (in the branch convention of Appendix~\ref{app:25-explicit}) gives $-u_{0}$. Expanding, for $|z_{1}|>|z_{2}|$,
\begin{equation*}
x^{1/2}(z)=\pm2u_{0}^{1/2}\Bigl(z-\frac{1}{4z}+O(z^{-3})\Bigr),
\qquad
\frac{1}{(z_{1}-z_{2})^{2}}=\sum_{n\ge0}(n+1)\,\frac{z_{2}^{\,n}}{z_{1}^{\,n+2}},
\end{equation*}
the residue in $z_{1}$ forces $n=0$ and selects the leading mode $2u_{0}^{1/2}$, leaving the residue in $z_{2}$ to select $-\tfrac12u_{0}^{1/2}$. Their product is $-u_{0}$, the branch sign entering squared and therefore cancelling, and the other expansion region gives the same value by symmetry. Combined with the geometric ramification value $u_{1}^{*}=x(0)=-2u_{0}$, this proves the claim.
\end{proof}
Note that on the SC-$y$ side $F_{0}$ is computed by an entirely different procedure (Chekhov--Eynard--Orantin residue formula on the Chebyshev curve) than on the FM-$y$ side (Saito--Dubrovin theory). That the resulting object satisfies the same defining tau relation for every $s$ is strong evidence that the two formulations compute the same tau-function at MLG background.

\subsection{Fusion structure and the signed Verlinde formula}

The three-point data of the spectral-curve corners carry one datum beyond the universal ratios: a sign. For the non-unitary model $M(2,\pp)$ the Verlinde fusion numbers
\begin{equation}\label{eq:Verlinde-std}
N_{abc} = \sum_{r=1}^{s}\frac{S_{ra}S_{rb}S_{rc}}{S_{r1}},
\qquad
S_{ab}=\frac{2}{\sqrt{\pp}}\,\sin\frac{2\pi a b}{\pp},
\end{equation}
take values in $\{-1,0,+1\}$ rather than $\{0,1\}$. At $(2,11)$, for instance, $N_{555}=+1$ while $N_{255}=N_{345}=-1$. Part of the minimal-gravity literature instead records the fusion structure through absolute values $|N_{abc}|$: the formula
\begin{equation}\label{eq:verl-Artemev}
\mathcal N^{(0)}_{i_{1}\cdots i_{k}}
=(-1)^{\sum_{l}(i_{l}-1)}\,b^{2}
\sum_{m=1}^{\pp-1}\frac{\prod_{l}\sin\pi m i_{l} b^{2}}{(\sin\pi m b^{2})^{k-2}}
\end{equation}
used in~\cite{Artemev} (with $b^{2}=1/\pp$) returns $|N_{abc}|$, discarding the sign.

Our SC-$y$ amplitudes at $(2,7)$, normalised by $\varkappa^{2}\pp^{2}$ and evaluated at $u_{0}=1$, reproduce the signed $N_{abc}$ of \eqref{eq:Verlinde-std} exactly, negative entries included (Appendix~\ref{app:27-SCy}). SC-$x$ produces the same matrices directly, the normalisation cancelling (Appendix~\ref{app:27-SCx}). The topological recursion description is therefore sensitive to the non-unitary sign structure of $M(2,\pp)$ in a way the absolute-value constructions such as \eqref{eq:verl-Artemev} are not. The magnitude $|\varkappa|=2$ is the established normalisation of \cite{DSV2026}, so only the sign is fixed here, by the single cancellation $A^{0}_{3}(1,1,3)\propto(2+\varkappa)$ at $(2,7)$. Whether that sign is an intrinsic feature of the non-unitary fusion algebra or an artefact of the branch and $\varkappa$ conventions is left open. That the same constant then reproduces every signed Verlinde entry at $(2,5)$ and $(2,7)$ with no further tuning is an over-determined check rather than a fit.

\subsection{Status of the four equivalences}

We summarize the status of the four correspondences of the conjectural picture (Section~\ref{sec:intro}) in a form that makes the present status explicit.

\begin{description}
\item[FM-$y$ $\leftrightarrow$ FM-$x$:]
Established at the level of all fusion-allowed correlators in \cite{BR}. FM-$y$ and FM-$x$ agree on three-point amplitudes (after the resonance transformation is correctly tuned in FM-$y$) and on most four-point amplitudes. A discrepancy at certain four-point amplitudes whose configurations are not forbidden by the conformal selection rules was identified in \cite{BR}. The continuous MLG calculation in this regime \cite{Aleshkin:2016snp} agrees with FM-$y$.

\item[SC-$y$ $\leftrightarrow$ SC-$x$:]
Proven in general by Dekinga--Shadrin--Verlinde \cite{DSV2026} up to regular terms: the swapped curve amplitudes equal the singular part of the resonance-transformed standard-curve amplitudes, up to the explicit factor $2^{n}\pp^{n}$. We confirm this explicitly at $(2,5)$ and $(2,7)$ (Appendices~\ref{app:25-SCx},~\ref{app:27-SCx}), where the two curves give the same universal ratios and the same signed Verlinde matrices.

\item[FM-$y$ $\leftrightarrow$ SC-$y$:]
Established at MLG background by Proposition~\ref{prop:taudef} (defining tau relation) together with the explicit equivalence of the Artemev and BDM resonance maps (Appendix~\ref{sec:BDM-derivation}). Off-background equivalence remains to be proved. An analytic proof would amount to deriving the dispersionless Gelfand--Dikij flow equations from the TR residue prescription on the Chebyshev curve.

\item[FM-$x$ $\leftrightarrow$ SC-$x$:]
Conjecturally an essentially linear identification, with the $A_{p-1}$ Frobenius structure constants matching the $\check \omega_{0,3}$ residues. Not yet verified in detail; we consider it as an open problem.
\end{description}

We note that the four-corner diagram has a horizontal symmetry FM-$y$/SC-$y$ versus FM-$x$/SC-$x$ (the ``with-resonance'' versus ``without-resonance'' axis), and a vertical symmetry FM/SC (the ``Saito flat coordinates'' versus ``KdV times'' axis). Equivalence at the diagonal, i.e. FM-$y$ $\leftrightarrow$ SC-$x$ and FM-$x$ $\leftrightarrow$ SC-$y$, is the deepest, since it bundles the $x\leftrightarrow y$ swap (established in general by~\cite{DSV2026}) with the Saito $\leftrightarrow$ KdV correspondence.

\section{Kontsevich connection}
\label{sec:konts}

The $x\leftrightarrow y$ swap of topological recursion links the two ``with-resonance'' corners (FM-$y$, SC-$y$) to the two ``without-resonance'' corners (FM-$x$, SC-$x$).  The point of this section is to identify the corresponding transformation before taking the dispersionless limit.  The starting point is the exact Kharchev--Marshakov transform of the basis functions, whose quasiclassical phase already contains the classical action.

\subsection{One planar curve from the two-matrix model}

Consider the two-matrix model
\begin{equation}
 Z_N=\int dM_1\,dM_2\,
 \exp\!\left[-N\Tr\bigl(V_1(M_1)+V_2(M_2)-M_1M_2\bigr)\right].
\end{equation}
With
\begin{equation}
 W_{1,N}(x)=\frac1N\left\langle\Tr\frac1{x-M_1}\right\rangle,
 \qquad
 Y_N(x)=V_1'(x)-W_{1,N}(x),
\end{equation}
define the mixed polynomial
\begin{equation}
 P_N(x,y)=\frac1N\left\langle\Tr\left[
 \frac{V_1'(x)-V_1'(M_1)}{x-M_1}
 \frac{V_2'(y)-V_2'(M_2)}{y-M_2}
 \right]\right\rangle .
\end{equation}
The apparent quotients are polynomials because the potentials are polynomial.  The exact master loop equation can be written as
\begin{equation}\label{eq:exact-master-curve}
 E_N\bigl(x,Y_N(x)\bigr)
 =\frac1{N^2}\,\mathcal U_N\bigl(x,Y_N(x);x\bigr),
\end{equation}
where $\mathcal U_N$ is a connected two-trace correlator and
\begin{equation}
 E_N(x,y)=
 \bigl(V_1'(x)-y\bigr)\bigl(V_2'(y)-x\bigr)-P_N(x,y)+1.
\end{equation}
Consequently, assuming the usual topological expansion,
\begin{equation}\label{eq:common-planar-curve}
 E^{(0)}\bigl(x,Y^{(0)}(x)\bigr)=0.
\end{equation}
The simultaneous exchange $M_1\leftrightarrow M_2$, $V_1\leftrightarrow V_2$, $x\leftrightarrow y$ gives, 
\begin{equation}
 E^{(0)}\bigl(X^{(0)}(y),y\bigr)=0,
 \qquad X^{(0)}(y)=V_2'(y)-W_2^{(0)}(y).
\end{equation}
Thus the two planar resolvents give two local descriptions of one algebraic curve.

The differential carried by this curve follows from the inverse characteristic-polynomial observable
\begin{equation}
 \Psi_{1,N}(x)=e^{NV_1(x)}
 \left\langle\det(x-M_1)^{-1}\right\rangle.
\end{equation}
Planar factorisation gives
\begin{equation}
 \frac1N\partial_x\log\Psi_{1,N}(x)
 =V_1'(x)-W_1^{(0)}(x)+O(N^{-1})
 =Y^{(0)}(x)+O(N^{-1}).
\end{equation}
Hence $\Psi_{1,N}\sim e^{NS^{(0)}}$ has
\begin{equation}\label{eq:planar-action-differential}
 dS^{(0)}=y\,dx.
\end{equation}
The exchanged observable similarly carries $x\,dy$.

\subsection{The exact Kharchev--Marshakov transform}

Let $z$ be a uniformising coordinate and put
\begin{equation}
 W(z)=x(z),\qquad Q(z)=y(z),\qquad
 \deg W=2,\qquad\deg Q=\pp.
\end{equation}
For arbitrary coprime polynomials $W,Q$, Kharchev and Marshakov \cite{KHARCHEV_1995} introduced the action
\begin{equation}\label{eq:KM-action}
 \mathcal S_{W,Q}(z,\mu)
 =-\int^z W(\zeta)\,dQ(\zeta)+Q(z)W(\mu)
\end{equation}
and proved the exact transform of the basis functions
\begin{equation}\label{eq:KM-basis-transform}
 \phi_i(\mu)=
 \sqrt{W'(\mu)}\,e^{-\mathcal S_{W,Q}(\mu,\mu)/\hbar}
 \int_\Gamma dz\,\sqrt{Q'(z)}\,
 f_i(z)e^{\mathcal S_{W,Q}(z,\mu)/\hbar}.
\end{equation}
The contour $\Gamma$ specifies the formal solution.  The functions $\phi_i$ satisfy the $W$-reduction and its Kac--Schwarz constraint. The $f_i$ satisfy the exchanged $Q$-reduction and the dual constraint.

The natural spectral coordinates of the two reduced hierarchies are
\begin{equation}\label{eq:two-natural-coordinates}
 u=W(\mu)^{1/2}=x(\mu)^{1/2},
 \qquad
 \eta=Q(z)^{1/\pp}=y(z)^{1/\pp}.
\end{equation}
After changing variables, the basis transform is
\begin{equation}
 \widehat\phi_i(u)=\int d\eta\,
 K_{W,Q}(u,\eta)\widehat f_i(\eta),
\end{equation}
with
\begin{equation}\label{eq:KM-kernel}
 K_{W,Q}(u,\eta)=
 \sqrt{2\pp\,u\eta^{\pp-1}}\,
 \exp\!\left\{\frac{
 \mathcal S_{W,Q}(z(\eta),\mu(u))
 -\mathcal S_{W,Q}(\mu(u),\mu(u))}{\hbar}\right\}.
\end{equation}

The associated tau-functions are the determinant ratios
\begin{equation}
 \tau_{W,Q}\bigl(T(u)\bigr)
 =\frac{\det\widehat\phi_i(u_j)}{\Delta(u)},
 \qquad
 \tau_{Q,W}\bigl(\bar T(\eta)\bigr)
 =\frac{\det\widehat f_i(\eta_j)}{\Delta(\eta)},
\end{equation}
where, in a convention with explicit genus parameter,
\begin{equation}
 T_k(u)=-\frac{\hbar}{k}\sum_j u_j^{-k},
 \qquad
 \bar T_k(\eta)=-\frac{\hbar}{k}\sum_j\eta_j^{-k}.
\end{equation}
Applying the Andr\'eief identity to \eqref{eq:KM-basis-transform} gives the exact determinant-level relation
\begin{equation}\label{eq:gkm-transform}
 \tau_{W,Q}\bigl(T(u)\bigr)
 =\frac{1}{N!\,\Delta(u)}
 \int_{\Gamma^N}\prod_{a=1}^N d\eta_a\,
 \det_{j,a}K_{W,Q}(u_j,\eta_a)\,
 \Delta(\eta)\,
 \tau_{Q,W}\bigl(\bar T(\eta)\bigr). 
\end{equation}
An overall factor depending only on $N$ and $\hbar$ has been absorbed in the conventional time-independent normalisation of the tau-functions. All Vandermonde and Jacobian factors are explicit in \eqref{eq:gkm-transform}.

\subsection{Matrix-integral form and the Kontsevich specialization}

The determinant relation \eqref{eq:gkm-transform} admits a direct matrix-integral rewriting. Introduce the eigenvalues of the original spectral functions,
\begin{equation}
 X_j=W(\mu_j)=u_j^2,
 \qquad
 Y_a=Q(z_a)=\eta_a^{\pp},
\end{equation}
together with the diagonal matrices $\widehat X=\mathrm{diag}(X_1,\ldots,X_N)$
and $\widehat Y=\mathrm{diag}(Y_1,\ldots,Y_N)$,
and define, on the branch selected by the contour,
\begin{equation}
 V(Y)=\int^{z(Y)}W(z)\,dQ(z).
\end{equation}
The kernel splits into a mixed exponential and factors depending on the two
sets of eigenvalues separately,
\begin{equation}\label{eq:KM-kernel-split}
 K_{W,Q}(u_j,\eta_a)
 =\underbrace{\sqrt{2u_j}\,e^{B(u_j)/\hbar}}_{\text{external}}\;
  \underbrace{\sqrt{\pp\,\eta_a^{\pp-1}}}_{\text{integration}}\;
 \exp\!\left[\frac{X_jY_a-V(Y_a)}{\hbar}\right],
 \qquad B(u)=-\!\int^{\mu(u)}\!\!y\,dx,
\end{equation}
so that only the mixed exponential enters the determinant.
Consequently its determinant has precisely the Harish--Chandra--Itzykson--Zuber form,
\begin{equation}
 \frac{\det_{j,a}e^{X_jY_a/\hbar}}
 {\Delta(X)\Delta(Y)}
 \propto
 \int_{U(N)}dU\,
 \exp\!\left[\frac1\hbar
 \Tr(\widehat XU\widehat YU^\dagger)\right].
\end{equation}
Restoring the angular variables therefore gives
\begin{equation}\label{eq:KM-matrix-integral}
 \tau_{W,Q}\bigl(T(X^{1/2})\bigr)
 =\mathcal N(X)
 \int dM\,\mathcal J_{\pp}(M)
 \exp\!\left[-\frac1\hbar
 \Tr\bigl(V(M)-\widehat X M\bigr)\right]
 \tau_{Q,W}\bigl(-\hbar[M^{-1/\pp}]\bigr), 
\end{equation}
where \begin{equation}
 [M^{-1/\pp}]_k=\frac1k\Tr M^{-k/\pp}.
\end{equation}
The external factor is
\begin{equation}\label{eq:KM-external-factor}
 \mathcal N(X)\;\propto\;
 \frac{\Delta(X)}{\Delta(X^{1/2})}
 \prod_{j=1}^{N}\sqrt{2u_j}\;e^{B(u_j)/\hbar}
 \;=\;\prod_{j<k}(u_j+u_k)
 \prod_{j=1}^{N}\sqrt{2u_j}\;e^{-S^{(0)}(X_j)/\hbar},
\end{equation}
up to a constant depending only on $N$, $\hbar$ and $\pp$, where
$S^{(0)}=\int y\,dx$ is the planar action \eqref{eq:planar-action-differential}.
The function $\mathcal J_{\pp}$ is defined on eigenvalues by
\begin{equation}\label{eq:KM-ramified-measure}
 \mathcal J_{\pp}(Y_1,\ldots,Y_N)
 =\frac{\Delta(Y^{1/\pp})}{\Delta(Y)}
 \prod_{a=1}^{N}
 \frac{1}{\sqrt{\pp}\,Y_a^{(\pp-1)/(2\pp)}}.
\end{equation}
It is the Jacobian and Vandermonde ratio produced by the ramified coordinate $\eta=Y^{1/\pp}$. Thus \eqref{eq:KM-matrix-integral} is an exact matrix-integral form of the KM transform, but for $\pp>1$ it is not an ordinary one-matrix integral with the flat Hermitian measure. 

The generalized Kontsevich model is recovered when one spectral polynomial has degree one. Set
\begin{equation}
 Q(z)=z,
 \qquad
 W(z)=V'(z).
\end{equation}
Then $\mathcal J_1=1$, the degree-one exchanged reduction is the vacuum tau-function, and the KM action is
\begin{equation}
 \mathcal S_{W,Q}(z,\mu)=-V(z)+zV'(\mu).
\end{equation}
Equation~\eqref{eq:KM-matrix-integral} reduces, with the standard saddle
normalisation, to the generalized Kontsevich
integral~\cite{Kharchev:1992gi},
\begin{equation}\label{eq:GKM-specialization}
 Z_V(\Lambda)
 \propto
 \int dM\,
 \exp\!\left[-\frac1\hbar
 \Tr\bigl(V(M)-V'(\Lambda)M\bigr)\right].
\end{equation}
For $V(M)=M^3/3$ this is the ordinary Kontsevich model. In this precise sense \eqref{eq:gkm-transform} extends the Kontsevich mechanism: both spectral polynomials are nontrivial, so the transformed integrand contains a dual tau-function and the measure retains the fractional coordinate of the exchanged reduction.

\subsection{The dispersionless limit and the dual times}

The saddle of the one-particle action is
\begin{equation}\label{eq:gkm-saddle}
 \partial_z\mathcal S_{W,Q}(z,\mu)
 =Q'(z)\bigl(W(\mu)-W(z)\bigr)=0.
\end{equation}
On the branch near infinity this gives $z=\mu$, or, invariantly,
\begin{equation}
 W(z)=W(\mu).
\end{equation}
This is the semiclassical canonical relation of the basis kernel.  The quasiclassical phase of the dual tau-function supplies the dual hierarchy data used below. The relation does not identify the two fractional coordinates $u$ and $\eta$. Rather, it identifies a single point of the common curve, described in the two frames by
\begin{equation}
 u=x^{1/2},\qquad \eta=y^{1/\pp}.
\end{equation}
The on-shell action is
\begin{equation}\label{eq:KM-onshell}
 \mathcal S_{W,Q}(\mu,\mu)
 =-\int^\mu W\,dQ+Q(\mu)W(\mu)
 =\int^\mu Q\,dW
 =\int^\mu y\,dx,
\end{equation}
in agreement with \eqref{eq:planar-action-differential}.  The spectral curve is therefore already contained in the exact basis transform.

The two sets of dispersionless times extracted from the same on-shell action are \cite{KHARCHEV_1995}
\begin{align}
 T_\ell^{(x,y)}
 &=-\frac{2}{\ell(2-\ell)}
 \res_{\infty}x^{1-\ell/2}\,dy,
 \label{eq:KM-x-times}\\
 \bar T_\ell^{(y,x)}
 &=\frac{\pp}{\ell(\pp-\ell)}
 \res_{\infty}y^{1-\ell/\pp}\,dx.
 \label{eq:KM-y-times}
\end{align}
The SC-$y$ times used in this paper are the odd subsequence of the first set.  An integration by parts gives
\begin{equation}\label{eq:paper-KM-time-relation}
 t_k=\frac{\pp-2k}{2}\,
 T_{\pp-2k}^{(x,y)},\qquad k=1,\ldots,s.
\end{equation}

A deformation written only in the standard SC-$y$ times cannot, by itself, distinguish all resonant monomials of the same gravitational dimension.  For example, at $(2,11)$ both $\tau_5$ and $\tau_2^2$ contribute to $t_5$.  The same ambiguity appears if one deforms the standard curve in the $A_{\pp-1}$ Saito flat coordinates: those coordinates determine the background and the linear response, but not the genuinely mixed coefficient.  The independent coupling coordinate must therefore be supplied by the swapped, no-resonance frame.

Now use the second natural coordinate $\eta=y^{1/\pp}$ and write the physical part of the deformed curve as
\begin{equation}\label{eq:dual-Laurent-deformation}
 x(\eta)=x_{\rm bg}(\eta)+
 \sum_{m=1}^s r_m\eta^{-2m},
 \qquad
 x_{\rm bg}(\eta)=\eta^2+\eta^{-2}
 +O(\eta^{2-2\pp}).
\end{equation}
Substitution into \eqref{eq:KM-y-times} proves that the Laurent coefficients are diagonal linear coordinates of the dual hierarchy:
\begin{equation}\label{eq:dual-time-Laurent}
 \delta\bar T_{\pp-2m}^{(y,x)}
 =-\frac{\pp}{\pp-2m}\,r_m,
 \qquad m=1,\ldots,s. 
\end{equation}
Indeed, in $\eta^{\pp-\ell}x'(\eta)d\eta$ the coefficient $r_m$ contributes to the residue only for $\ell=\pp-2m$.  The $r_m$ are therefore the dual dispersionless times themselves.

The SC-$x$ corner has a linear, no-resonance dictionary with the worldsheet couplings.  Its normalisation is fixed by linear data alone.  In \eqref{eq:KM-y-times} the dual time carries the power $\check x^{2m/\pp}$, while the SC-$x$ times
\begin{equation}\label{eq:SCx-time-def}
 \check t_m=\tfrac12\res_{\infty}
 \check\omega_{0,1}\,\check x^{-(\pp-2m)/\pp}
\end{equation}
carry $\check x^{-(\pp-2m)/\pp}$.  The two powers are related by an integration by parts,
\begin{equation}\label{eq:dual-time-by-parts}
 \res_{\infty}\check x^{2m/\pp}\,d\check y
 =-\frac{2m}{\pp}\res_{\infty}
 \check y\,\check x^{-(\pp-2m)/\pp}\,d\check x,
\end{equation}
whose prefactor cancels the explicit $\pp$ of \eqref{eq:KM-y-times} and leaves, with $\check\omega_{0,1}=\check y\,d\check x$,
\begin{equation}\label{eq:dual-time-omega}
 \bar T_{\pp-2m}^{(y,x)}
 =-\frac{1}{\pp-2m}\res_{\infty}
 \check\omega_{0,1}\,\check x^{-(\pp-2m)/\pp}
 =-\frac{2}{\pp-2m}\,\check t_m\;.
\end{equation}
We set $u_0=1$ temporarily. Inserting the linear dictionary \eqref{eq:SCx-linear-dictionary} gives
\begin{equation}\label{eq:dual-time-physical-coupling}
 \left.\frac{\partial}{\partial\delta\tau_m}\right|_{\mathrm{SC}\text{-}x}
 =\frac{2\pp}{\pp-2m}
 \frac{\partial}{\partial\bar T_{\pp-2m}^{(y,x)}}.
\end{equation}
The factor $2$ is the $\tfrac12$ of \eqref{eq:SCx-time-def}, the factor $\pp$ and the overall sign come from \eqref{eq:SCx-linear-dictionary}, and the two minus signs cancel.  The per-insertion normalisation \eqref{eq:kappa-value} does not enter. Integrating this linear relation about the MLG background yields
\begin{equation}\label{eq:dual-time-physical-linear}
 \delta\bar T_{\pp-2m}^{(y,x)}
 =\frac{2\pp}{\pp-2m}\,\delta\tau_m.
\end{equation}
Combining \eqref{eq:dual-time-physical-linear} with the independently computed Laurent residue \eqref{eq:dual-time-Laurent} gives
\begin{equation}\label{eq:dual-physical-normalisation}
 \delta\tau_1=\tau_1+\frac12,\qquad
 \delta\tau_m=\tau_m\quad(m>1),\qquad
 r_m=-2\delta\tau_m.
\end{equation}
The inputs are the Kharchev--Marshakov time definition \eqref{eq:KM-y-times}, the integration by parts \eqref{eq:dual-time-by-parts}, and the linear dictionary \eqref{eq:SCx-linear-dictionary}. Equivalently, including the background $\tau_1=-1/2$, the physical Laurent curve is
\begin{equation}\label{eq:physical-Laurent-curve}
 x(\eta;\tau)=\eta^2
 -2\sum_{m=1}^s\tau_m\eta^{-2m}
 +O(\eta^{2-2\pp}).
\end{equation}

\subsection{Alternative derivation of the resonance map}
\label{sec:konts-curve}

 Rewriting the SC-$y$ residue in the dual coordinate gives
\begin{equation}\label{eq:residue-dual-coordinate}
 t_k=\frac12\res_{\eta=\infty}
 \eta^{\pp}x(\eta)^{k-\pp/2}x'(\eta)d\eta.
\end{equation}
Put
\begin{equation}
 x(\eta)=\eta^2\left(1-2\sum_{m=1}^s
 \tau_m\eta^{-2(m+1)}\right),
 \qquad
 \beta=k+1-\frac\pp2.
\end{equation}
The omitted background tail in \eqref{eq:physical-Laurent-curve} starts at $\eta^{2-2\pp}=\eta^{-4s}$ and therefore cannot contribute to the coefficient $[\eta^{-2k-2}]$ for $k\leq s$. An integration by parts turns \eqref{eq:residue-dual-coordinate} into
\begin{equation}\label{eq:resonance-coefficient-extraction}
 t_k=\frac{\pp}{\pp-2k-2}
 [\eta^{-2k-2}]
 \left(1-2\sum_{m=1}^s
 \tau_m\eta^{-2(m+1)}\right)^\beta.
\end{equation}
The $n$th binomial term contributes precisely when
\begin{equation}
 \sum_{a=1}^n(m_a+1)=k+1.
\end{equation}
Since
\begin{equation}
 \frac{1}{\pp-2k-2}\binom{\beta}{n}(-2)^n
 =\frac1{n!}
 \frac{(2s-2k+2n-3)!!}{(2s-2k-1)!!},
\end{equation}
we obtain exactly the compact resonance transformation \eqref{eq:taudef}.  The equivalence of the amplitudes obtained from this transformation to those in the swapped frame was subsequently proved in~\cite{DSV2026}.  We have obtained it here at genus zero from the Kharchev--Marshakov dual times, for the whole $(2,\pp)$ series rather than only for the models checked explicitly below.

For reference, its first non-linear terms are
\begin{equation}\label{eq:res-shifts}
\begin{aligned}
(2,5):\quad &R_{\rm res}=\mathbf 1,\\
(2,7):\quad &t_3\mapsto t_3+\frac1{14}t_1^2,\\
(2,9):\quad &t_3\mapsto t_3+\frac16t_1^2,
&t_4\mapsto t_4+\frac19t_1t_2,\\
(2,11):\quad&t_3\mapsto t_3+\frac5{22}t_1^2,
&t_4\mapsto t_4+\frac3{11}t_1t_2,\\
&t_5\mapsto t_5+
\frac{t_1^3+11t_2^2+22t_1t_3}{242}.
\end{aligned}
\end{equation}
As a check that does not use the Laurent deformation, the undeformed Chebyshev background alone gives
\begin{equation}\label{eq:background-resonance-check}
 \frac{t_3^0}{(t_1^0)^2}
 =\frac{2s-5}{2\pp}
 =\begin{cases}
 1/14,&(2,7),\\
 1/6,&(2,9),\\
 5/22,&(2,11),
 \end{cases}
\end{equation}
in agreement with the corresponding $t_1^2$ terms in \eqref{eq:res-shifts}. In particular, at $(2,11)$ the coefficient of $\tau_2^2$ in the normalised combination $t_5/(11u_0^6)$ is
\begin{equation}\label{eq:konts-mixed}
 \left[\frac{t_5}{11u_0^6}\right]_{\tau_2^2}=\frac12.
\end{equation}
Equations~\eqref{eq:res-shifts}--\eqref{eq:konts-mixed}, together with the background check \eqref{eq:background-resonance-check}, give the explicit low-model tests of the coefficient extraction above.

\section{Discussion}\label{sec:outlook}

In this work we have compared four algebraic descriptions of $(2,\pp)$ minimal Liouville gravity, two built on Frobenius manifolds and two on spectral curves and topological recursion, at the level of the genus-zero generating function. On the universal, normalisation-independent three-point ratios $R$ all four corners agree exactly with the conformal field theory prediction. Because $R$ is scheme-free this agreement carries no resonance- or normalisation-tuning ambiguity, and to our knowledge such a comparison has not been made before. At the level of the amplitudes themselves the dictionary closes with every normalisation fixed. Matching the two spectral-curve corners to the Frobenius ones requires a single per-insertion factor $\varkappa=-2$, fixed by the cancellation $A^{0}_{3}(1,1,3)\propto(2+\varkappa)\to0$ at $(2,7)$, with which SC-$y$ reproduces the signed Verlinde matrices of the non-unitary matter sector, fusion-forbidden cancellations included. We have also established the defining tau relation $u_{1}^{*}=2\,Z_{12}^{t}[s,s]$ at the MLG background for all $(2,\pp)$ (Proposition~\ref{prop:taudef}).

Two further results concern the resonance transformation. In Appendix~\ref{sec:BDM-derivation} the compact formula \eqref{eq:taudef} is reconciled with the Belavin--Dubrovin--Mukhametzhanov conformal-selection-rule construction~\cite{Belavin:2013nba}, through the master formula $v_{n}=c_{0}^{-1}[x^{s-n}]Y_{u}(x)$ and the dictionary
\begin{equation}\label{eq:tau-rescale-main}
\tau_{n}^{\text{Art}}
\;=\;
-\tfrac{1}{2}\,\delta_{n,1}
\;+\;
\frac{\lambda_{n-1}^{\text{BDM}}}{\pp\,u_{0}^{n+1}}\,,
\qquad
\mu = u_{0}^{2}\,,
\end{equation}
carried to quadratic order in the Liouville couplings, together with the previously unrecorded cubic-order blocks. In Section~\ref{sec:konts} the exact Kharchev--Marshakov transform of the basis functions is lifted, with all Vandermonde and half-density factors, to the tau-function relation \eqref{eq:gkm-transform}. Its dispersionless coordinates are $x^{1/2}$ and $y^{1/\pp}$, the dual times are the Laurent coefficients of $x(y^{1/\pp})$ through \eqref{eq:dual-time-Laurent}, and the independently fixed SC-$x$ normalisation \eqref{eq:dual-time-physical-coupling} identifies those coefficients with physical couplings without using the nonlinear resonance map. Substituting them in the standard-frame residue gives an alternative genus-zero derivation of \eqref{eq:taudef} for arbitrary $(2,\pp)$, including all mixed coefficients, with $(2,5)$--$(2,11)$ as explicit checks.

The comparison of physical amplitudes is restricted to genus zero, three points, and universal ratios. Bare correlators and the four-point level, where an inequivalence was observed in~\cite{BR}, are not addressed, and the physical status of the signed Verlinde entries is left open. The conjecture that the corresponding amplitudes coincide with those computed in the swapped frame~\cite{Artemev} was proved in~\cite{DSV2026} through the all-genus $x\!\leftrightarrow\! y$ swap, and the derivation given here does not claim that result anew. Its distinct contribution is the determinant tau-function lift of the Kharchev--Marshakov basis transform and the identification of the physical SC-$x$ deformation with its dual Laurent times.

Several natural questions remain open. The cubic terms encoded in the BDM Jacobi blocks $Y_u^{k_1k_2k_3}(x)$, not written explicitly in \cite{Belavin:2013nba} for the $(2,\pp)$ series, can in principle be extracted from \eqref{eq:taudef} by disentangling the contributions to a given $\lambda$-monomial from different powers of the background-shifted $\tau_1$. To our knowledge they have not been obtained directly from the conformal selection rules, whereas the coefficient extraction of Section~\ref{sec:konts-curve} produces them, the cosmological term $\tfrac12\tau_{1}^{3}$ at $(2,11)$ being one instance. Proposition~\ref{prop:taudef} holds at the MLG background, and its off-background version, valid at every solution of the dispersionless hierarchy, would establish full equivalence of SC-$y$ and FM-$y$ as tau-functions. That amounts to deriving the dispersionless Gelfand--Dikij flows from the Eynard--Orantin residue prescription on the Chebyshev curve, close to known topological recursion results but not written down in the form required here. Since \eqref{eq:gkm-transform} holds before the dispersionless limit, the higher-genus corrections should be recoverable from the symplectic (ABCDS) kernel \cite{Alexandrov:2022ydc} of the swap \cite{DSV2026} on the SC-$x$ side, where at $(2,5)$ it can be written in closed form. Finally, the picture is restricted here to the Lee--Yang series, and the general $(q,p)$ minimal string is a much richer system.

\acknowledgments A.A. was supported by the Institute for Basic Science (IBS-R003-D1). A.A. thanks MPIM for hospitality. V.B. thanks the IBS Center for Geometry and Physics (IBS CGP, Pohang) for hospitality during the visit at which this project was initiated.

\appendix

\section{Resonance transformation from conformal selection rules}\label{sec:BDM-derivation}

Let us show that the compact resonance formula \eqref{eq:taudef} and the Jacobi-polynomial construction \cite{Belavin:2013nba} are equivalent. The explicit dictionary \eqref{eq:tau-rescale-main}, stated in Section~\ref{sec:outlook}, is established by matching the two constructions order by order.

The general form of the resonance transformation is
\begin{equation}\label{eq:bdm-res}
\tau_{1,k+1}
\;=\;\lambda_{k}
+C_{k}\,\mu^{\delta_{k}}
+\sum_{l,k_{1}}C_{k}^{l,k_{1}}\mu^{l}\lambda_{k_{1}}
+\sum_{l,k_{1},k_{2}}C_{k}^{l,k_{1},k_{2}}\mu^{l}\lambda_{k_{1}}
\lambda_{k_{2}}+\cdots,
\end{equation}
which is fixed order by order so that the resulting correlators satisfy the selection rules.

The starting point is the dimensionless ``string-equation polynomial''
\begin{equation}\label{eq:bdm-Yu-expansion}
Y_{u}(x) \;=\; Y_{u}^{0}(x)
\;+\; \sum_{k=0}^{s-1} s_{k}\,Y_{u}^{k}(x)
\;+\; \tfrac{1}{2}\sum_{k_{1},k_{2}=0}^{s-1}
s_{k_{1}}s_{k_{2}}\,Y_{u}^{k_{1}k_{2}}(x)
\;+\;\cdots\,,
\end{equation}
where $s=(\pp-1)/2$ as throughout (this parameter is denoted $p$ in \cite{Belavin:2013nba}), and the $s_{k}$, $k=0,\dots,s-1$, are dimensionless rescaled Liouville couplings. The building blocks $Y_{u}^{k}(x)$ are Jacobi polynomials:
\begin{equation}\label{eq:Yk-jacobi}
Y_{u}^{k}(x) \;=\;
\begin{cases}
P_{(s-k-1)/2}^{(0,-1/2)}(y) & \text{if } s+k \text{ odd},\\[2pt]
x\,P_{(s-k-2)/2}^{(0,\,1/2)}(y) & \text{if } s+k \text{ even},
\end{cases}
\qquad y = 2x^{2}-1\,.
\end{equation}
The background polynomial $Y_{u}^{0}$ is determined\footnote{Despite the notation, the background polynomial is not
\eqref{eq:Yk-jacobi} at $k=0$: it has degree $s+1$, the block $Y_{u}^{0}$
degree $s-1$.} by the boundary condition
$Y_{u}^{0}(1)=0$ together with Jacobi orthogonality:
\begin{equation}\label{eq:Y0-formula}
Y_{u}^{0}(x) \;=\;
\begin{cases}
P_{(s+1)/2}^{(0,-1/2)}(y) - P_{(s-1)/2}^{(0,-1/2)}(y),
   & s\text{ odd},\\[2pt]
x\bigl(P_{s/2}^{(0,\,1/2)}(y) - P_{(s-2)/2}^{(0,\,1/2)}(y)\bigr),
   & s\text{ even}.
\end{cases}
\end{equation}
The quadratic-order blocks $Y_{u}^{k_{1}k_{2}}(x)$ are
\begin{equation}\label{eq:Yk1k2-formula}
Y_{u}^{k_{1}k_{2}}(x) \;=\;
\begin{cases}
\displaystyle
\frac{1}{2s+1}\sum_{n=0}^{n_{\max}} (4n+1)\,P_{n}^{(0,-1/2)}(y),
& s+k_{1}+k_{2}\text{ odd},\\[8pt]
\displaystyle
\frac{x}{2s+1}\sum_{n=0}^{n'_{\max}} (4n+3)\,P_{n}^{(0,1/2)}(y),
& s+k_{1}+k_{2}\text{ even},
\end{cases}
\end{equation}
with $n_{\max}=(s-k_{1}-k_{2}-3)/2$ and $n'_{\max}=(s-k_{1}-k_{2}-4)/2$. The cubic-order blocks $Y_{u}^{k_{1}k_{2}k_{3}}$ are not written explicitly in~\cite{Belavin:2013nba}. Below we extract them from the matching with~\eqref{eq:taudef}.

\subsubsection*{Polynomial-matching prescription}

The Douglas action \eqref{eq:Sy} in dimensionless variables reads
\begin{equation}\label{eq:Su-dimless}
\frac{S_{u}(u)}{u_{0}^{s+1}}
\;=\;
x^{s+1} + \sum_{n=1}^{s} v_{n}\,x^{s-n},
\qquad
x = \frac{u}{u_{0}}\,,\quad
v_{n} = \frac{\tau_{1,n}}{u_{0}^{n+1}}\,.
\end{equation}
The $v_{n}$ are the dimensionless rescaled Douglas times, and the resonance map relates them to the dimensionless lambdas $s_{k}$ via the BDM polynomial $Y_{u}(x)$:
\begin{equation}\label{eq:Su-equals-Yu}
\frac{S_{u}(u)}{u_{0}^{s+1}}
\;=\;
\frac{1}{c_{0}}\,Y_{u}(x),
\qquad
c_{0} = \text{leading coefficient of }Y_{u}^{0}(x).
\end{equation}
The normalisation constant $c_{0}$ is fixed by demanding that the leading $x^{s+1}$ coefficient on both sides match. With this normalisation, equating coefficients of $x^{s-n}$ on the two sides of \eqref{eq:Su-equals-Yu} gives
\begin{equation}\label{eq:vn-master}
v_{n} \;=\; \frac{1}{c_{0}}\,
[x^{s-n}]\,Y_{u}(x)
\;.
\end{equation}
This is the master formula: the dimensionless Douglas time $v_{n}$ equals $1/c_{0}$ times the coefficient of $x^{s-n}$ in the BDM polynomial $Y_{u}(x)$ expanded according to \eqref{eq:bdm-Yu-expansion}.

\subsubsection*{Reading off the resonance coefficients}

Substituting \eqref{eq:Yk-jacobi}--\eqref{eq:Yk1k2-formula} into \eqref{eq:bdm-Yu-expansion} and then into \eqref{eq:vn-master} gives the explicit form of $v_{n}$ as a polynomial in the dimensionless couplings $s_{k}$, with rational coefficients determined by Jacobi polynomial data:
\begin{align}\label{eq:vn-expanded}
v_{n}
&=\; v_{n}^{(0)}
\;+\; \sum_{k=0}^{s-1} v_{n}^{(k)}\,s_{k}
\;+\; \tfrac{1}{2}\sum_{k_{1},k_{2}} v_{n}^{(k_{1}k_{2})}\,
s_{k_{1}}s_{k_{2}}
\;+\;\cdots,\\
\intertext{with}
v_{n}^{(0)}
&\;=\;\frac{1}{c_{0}}\,[x^{s-n}]\,Y_{u}^{0}(x),
\nonumber\\
v_{n}^{(k)}
&\;=\;\frac{1}{c_{0}}\,[x^{s-n}]\,Y_{u}^{k}(x),
\nonumber\\
v_{n}^{(k_{1}k_{2})}
&\;=\;\frac{1}{c_{0}}\,[x^{s-n}]\,Y_{u}^{k_{1}k_{2}}(x)\,,
\nonumber
\end{align}
and analogously at higher orders. The dimensionful form is $\tau_{1,n} = u_{0}^{n+1}\,v_{n}$, $\lambda_{k} = u_{0}^{k+2}\,s_{k}$ (with $\mu=u_{0}^{2}$).

\subsubsection*{Example: $(2,7)$, $s=3$}

At $s=3$ the building blocks are
\begin{align*}
Y_{u}^{0}(x) &= \tfrac{35}{8}x^{4} - \tfrac{21}{4}x^{2} + \tfrac{7}{8},
\quad c_{0} = \tfrac{35}{8},\\
Y_{u}^{1}(x) &= x,\\
Y_{u}^{2}(x) &= 1.
\end{align*}
Quadratic blocks at $s=3$ are smaller, e.g.\ $Y_{u}^{00}(x) = \tfrac{1}{7}$ (only one term in the sum), $Y_{u}^{02}=Y_{u}^{20}=0$, etc.

Inserting into \eqref{eq:vn-master} gives:
\begin{align*}
v_{1} &= \frac{12}{35}s_{0} - \frac{6}{5},
\\
v_{2} &= \frac{8}{35}s_{1},
\\
v_{3} &= \frac{4}{245}s_{0}^{2} - \frac{4}{35}s_{0} + \frac{8}{35}s_{2}
        + \frac{1}{5}.
\end{align*}
Converting to dimensionful $\tau_{1,n} = u_{0}^{n+1}v_{n}$ and $\lambda_{k} = u_{0}^{k+2}s_{k}$ (with $\mu=u_{0}^{2}$):
\begin{align}\label{eq:bdm-tau-27}
\tau_{1,1}^{\rm BDM} &= -\tfrac{6}{5}\mu + \tfrac{12}{35}\lambda_{0}^{\rm BDM},
\nonumber\\
\tau_{1,2}^{\rm BDM} &= \tfrac{8}{35}\lambda_{1}^{\rm BDM},
\\
\tau_{1,3}^{\rm BDM} &= \tfrac{1}{5}\mu^{2} - \tfrac{4}{35}\mu\lambda_{0}^{\rm BDM}
                    + \tfrac{4}{245}\left(\lambda_{0}^{\rm BDM}\right)^{2} + \tfrac{8}{35}\lambda_{2}^{\rm BDM}.
\nonumber
\end{align}
This is the closed-form resonance map for $(2,7)$.

\subsubsection*{Example: $(2,11)$, $s=5$}

At $s=5$:
\begin{align*}
Y_{u}^{0}(x) &=
\tfrac{231}{16}x^{6} - \tfrac{385}{16}x^{4} + \tfrac{165}{16}x^{2}
- \tfrac{11}{16},
\quad c_{0} = \tfrac{231}{16}.
\end{align*}
The full set of dimensionless $v_{n}$ at quadratic order in $s_{k}$ is:
\begin{align*}
v_{1} &= \tfrac{10}{33}s_{0} - \tfrac{5}{3},\\
v_{2} &= \tfrac{40}{231}s_{1},\\
v_{3} &= \tfrac{20}{847}s_{0}^{2} - \tfrac{20}{77}s_{0} + \tfrac{8}{77}s_{2}
       + \tfrac{5}{7},\\
v_{4} &= \tfrac{16}{847}s_{0}s_{1} - \tfrac{8}{77}s_{1} + \tfrac{16}{231}s_{3},\\
v_{5} &= -\tfrac{4}{847}s_{0}^{2} + \tfrac{16}{2541}s_{0}s_{2}
       + \tfrac{2}{77}s_{0} + \tfrac{8}{2541}s_{1}^{2}
       - \tfrac{8}{231}s_{2} + \tfrac{16}{231}s_{4} - \tfrac{1}{21}.
\end{align*}
Converting to the dimensionful form $\tau_{1,5}$:
\begin{align}\label{eq:bdm-tau-1-5}
\tau_{1,5}^{\rm BDM}
\;&=\; -\tfrac{\mu^{3}}{21}
\;+\;\tfrac{2\mu^{2}\lambda_{0}^{\rm BDM}}{77}
\;-\;\tfrac{4\mu\left(\lambda_{0}^{\rm BDM}\right)^{2}}{847}
\;-\;\tfrac{8\mu\lambda_{2}^{\rm BDM}}{231}
\nonumber\\
&\quad\;+\;\tfrac{16\lambda_{0}^{\rm BDM}\lambda_{2}^{\rm BDM}}{2541}
\;+\;\tfrac{8\left(\lambda_{1}^{\rm BDM}\right)^{2}}{2541}
\;+\;\tfrac{16\lambda_{4}^{\rm BDM}}{231}
\;+\;\bigl(\text{cubic in }\lambda^{\rm BDM}\bigr).
\end{align}
The seven explicit terms above are produced by the polynomial matching at the order at which we know the BDM building blocks. The final cubic-in-$\lambda$ piece requires the cubic sector $Y_{u}^{k_{1}k_{2}k_{3}}(x)$, which is not described explicitly in \cite{Belavin:2013nba}. We extract its leading contribution from the matching with~\cite{Artemev} below in this section.

\subsection*{Consistency check}
\label{sec:reconciliation}

Applying the dictionary \eqref{eq:tau-rescale-main} to \eqref{eq:taudef}, we obtain a direct comparison between Artemev's compact resonance formula and the BDM polynomial expansion. The comparison is non-trivial because the two parametrizations organise the same data differently. The couplings $\tau_n^{\rm Art}$ are dimensionless and incorporate the MLG background implicitly through $\tau_1^{\rm Art}=-1/2$, $\tau_{n>1}^{\rm Art}=0$, whereas the BDM Liouville couplings $\lambda_k^{\rm BDM}$ vanish at the MLG background, and the cosmological constant $\mu$ appears explicitly.

Accordingly, the first term in \eqref{eq:tau-rescale-main} encodes the background shift, while the second provides the dimensionless rescaling that converts $\lambda_{n-1}$ into a dimensionless variable. The two formulations differ by an $n$-dependent normalisation,
\begin{equation}\label{eq:rec-out}
t_{n}^{\rm Art} \;=\; R_{n}\,\tau_{1,n}^{\rm BDM},
\qquad
R_{n} \;=\; \frac{1}{v_{n}^{(n-1)}}
\;=\; \frac{c_{0}}{[x^{\,s-n}]\,Y_{u}^{\,n-1}(x)}\,.
\end{equation}
At linear order $t_{n}\simeq\lambda_{n-1}=u_{0}^{n+1}s_{n-1}$ while
$\tau_{1,n}=u_{0}^{n+1}v_{n}\simeq u_{0}^{n+1}v_{n}^{(n-1)}s_{n-1}$, so $R_{n}$
is the inverse of the diagonal entry of the linear map \eqref{eq:vn-expanded}.

\paragraph{Resonance versus normalisation.}
The equivalence established here concerns the resonance map $t(\tau)$, which is common to the Frobenius manifold and spectral curve descriptions and is fixed completely by the MLG background: matching the times $t_{k}^{0}$ of \eqref{kdvdef} fixes the linear coefficients $\pp$, and matching $t_{3}^{0}$ fixes the quadratic coefficient $\tfrac12$. No freedom remains. On the Frobenius side this map enforces the conformal selection rules \cite{Belavin:2013nba}. On the spectral-curve side it does not, since the forbidden three-point amplitudes built from the bare topological recursion differentials fail to vanish for any resonance coefficient (the cancellation condition has no real solution). The selection rules on the SC side are restored instead by the per-insertion normalisation \eqref{eq:kappa-value}, the genuine topological recursion/minimal gravity relative factor, which cannot be absorbed into the rigidly fixed resonance.

\subsubsection*{Verification at $(2,11)$, $\tau_{1,5}$}

We include the $(2,11)$ case here as a test of the Artemev\,$\leftrightarrow$\,BDM dictionary. It is the first order at which both the cubic gap and a genuinely nonlinear set of monomials appear.  The values of the normalisation constants in \eqref{eq:rec-out}, computed from
the blocks $Y_{u}^{0}=\tfrac{35}{8}x^{4}-\tfrac{15}{4}x^{2}+\tfrac38$,
$Y_{u}^{2}=\tfrac32x^{2}-\tfrac12$ and $Y_{u}^{4}=1$, are
$R_{1}=33/10$, $R_{3}=77/8$, $R_{5}=231/16$.

The dimension-$\mu^{3}$ monomials in $\{\mu,\lambda_{0},\ldots,\lambda_{4}\}$ are exactly eight in number:
\begin{equation}\label{eq:8-monomials}
\mu^{3},\;\;\mu^{2}\lambda_{0},\;\;\mu\lambda_{0}^{2},\;\;
\mu\lambda_{2},\;\;\lambda_{0}^{3},\;\;\lambda_{0}\lambda_{2},\;\;
\lambda_{1}^{2},\;\;\lambda_{4}\,.
\end{equation}
The BDM formula \eqref{eq:bdm-tau-1-5} produces seven of these (all except $\lambda_{0}^{3}$, which requires the cubic sector). The formula \eqref{eq:taudef} at $(2,11), k=5$ has only four terms in $\tau$-variables:
\begin{equation}\label{eq:art-t-5}
t_{5}^{\rm Art}
= \tfrac{11 u_{0}^{6}}{2}(\tau_{1}^{\rm Art})^{3}
+ 11 u_{0}^{6}\,\tau_{1}^{\rm Art}\tau_{3}^{\rm Art}
+ \tfrac{11 u_{0}^{6}}{2}(\tau_{2}^{\rm Art})^{2}
+ 11 u_{0}^{6}\,\tau_{5}^{\rm Art}\,.
\end{equation}
Applying the dictionary \eqref{eq:rec-out} and expanding around the background, the four terms in \eqref{eq:art-t-5} produce all eight BDM monomials \eqref{eq:8-monomials}. Term by term:
\begin{center}
\begin{tabular}{l|c|c|c}
monomial & Artemev coef. & BDM coef. & ratio Art/BDM \\
\hline
$\mu^{3}$              & $-11/16$  & $-1/21$       & $231/16$\\
$\mu^{2}\lambda_{0}$    & $3/8$     & $2/77$        & $231/16$\\
$\mu\lambda_{0}^{2}$    & $-3/44$   & $-4/847$      & $231/16$\\
$\mu\lambda_{2}$        & $-1/2$    & $-8/231$      & $231/16$\\
$\lambda_{0}^{3}$       & $1/242$   & (cubic, missing in BDM-quad) & N/A\\
$\lambda_{0}\lambda_{2}$ & $1/11$   & $16/2541$     & $231/16$\\
$\lambda_{1}^{2}$        & $1/22$   & $8/2541$      & $231/16$\\
$\lambda_{4}$            & $1$      & $16/231$      & $231/16$\\
\end{tabular}
\end{center}

\subsubsection*{The cubic-$\lambda$ terms}

The $\lambda_{0}^{3}$ term, which is the cubic-in-$\lambda$ piece of $\tau_{1,5}$, is not produced by BDM at quadratic order because it requires the cubic Jacobi block $Y_{u}^{000}$. Artemev's formula does produce this term explicitly from the cubic expansion of $(-1/2+\delta)^{3}$, with coefficient $1/242$. Reading the BDM side via \eqref{eq:rec-out}, the BDM coefficient of $\lambda_{0}^{3}$ in $\tau_{1,5}$ must then be
\begin{equation}\label{eq:bdm-lam0-cubed}
[\lambda_{0}^{3}]\,\tau_{1,5}^{\rm BDM} \;=\; \frac{1}{242\,R_{5}}
\;=\; \frac{1}{242 \cdot 231/16}
\;=\; \frac{8}{27951}\,.
\end{equation}
This is the only cubic-order coefficient of the BDM $Y_{u}^{000}(x)$ at $s=5$, derived here by an indirect route: the explicit BDM Jacobi construction plus Artemev's compact form together determine both quadratic and cubic-order resonance coefficients. Note that this coefficient can also be derived from consistency with the coefficients in front of the terms $\mu^{3},\mu^{2}\lambda_{0},\mu\lambda_{0}^{2}$ if we assume that $\tau_{1,k}^{\rm BDM}$ depends on only of a combination $\lambda_0-p/2 \mu$, of parameters $\lambda_0$ and $\mu$. A direct derivation from the cubic Jacobi sector would amount to writing down $Y_{u}^{k_{1}k_{2}k_{3}}$ explicitly, which we leave for future work.

The compactness of \eqref{eq:taudef} relative to the selection-rule form has two sources: the MLG background $\tau_{1}^{\rm Art}=-1/2$ absorbs the cosmological constant, so each $(\tau_{1}^{\rm Art})^{n}$ expands into the whole $\mu^{a}\lambda_{0}^{n-a}$ family that BDM lists term by term. The uniform rescaling $\lambda_{n-1}^{\rm BDM}/(\pp\,u_{0}^{n+1})$ removes all $u_{0}$ and $\pp$ factors, leaving simple rationals. The two parametrisations are equivalent, Artemev's optimised for compact presentation and BDM for term-by-term identification. Thus \eqref{eq:rec-out} is verified through quadratic order plus the lowest cubic term. The general cubic sector $Y_{u}^{k_{1}k_{2}k_{3}}$, only schematic in \cite{Belavin:2013nba}, is left for future work.

\section{Explicit calculation for $(2,5)$}
\label{app:25-explicit}

In this appendix we work out the partition function $Z_0$, two-point amplitudes $Z_{12}$, three-point amplitudes $Z_{123}$, and the universal three-point ratio $R$ in each of the four corners of the diagram~\eqref{eq:fourcorners} at the Lee--Yang $(q,p)=(2,5)$ point. The cosmological background is $v_1=v_{10}$ on the FM sides and $u=u_0$ on the SC sides. We take into account the universality condition to discard correlators proportional to non-negative integer powers of $\mu$.

The matter primaries of $M(2,5)$ are $\phi_{1,1}$ (identity, $\Delta_{1,1}=0$) and $\phi_{1,2}$ ($\Delta_{1,2}=-1/5$), identified under Kac symmetry with $\phi_{1,4}$ and $\phi_{1,3}$ respectively.

\subsection*{FM-$x$}
\label{app:25-FMx}

In FM-$x$ at $(2,5)$ the relevant Frobenius manifold is $A_{p-1}=A_{4}$, with polynomial
\begin{equation}
P(x) \;=\; x^{5} + u_{1}\,x^{3} + u_{2}\,x^{2} + u_{3}\,x + u_{4}.
\end{equation}
Active couplings are $t_{1,1}$ and $t_{1,2}$, corresponding to the two physical Liouville couplings. The action, defined as the coefficient of $x^{-1}$ in the Laurent expansion at $x\to\infty$ of
\begin{equation}
-c\!\left(\tfrac{\pp+2}{\pp}\right)P(x)^{(\pp+2)/\pp}
\;-\;\sum_{n=1}^{s}c\!\left(\tfrac{\pp-2n}{\pp}\right)P(x)^{(\pp-2n)/\pp}\,t_{1,n},
\end{equation}
with $c(x)=\Gamma(x-\lfloor x\rfloor)/\Gamma(x+1)$, evaluates after Gamma-function identities to
\begin{equation}
S = -\frac{u_{1}^{4}}{25} + \tfrac{3}{10}u_{1}u_{2}^{2}
+ \tfrac{3}{10}u_{1}^{2}u_{3} - \tfrac{1}{2}u_{3}^{2} - u_{2}u_{4}
+ \tfrac{1}{5}\bigl(u_{1}^{2}-5u_{3}\bigr)\,t_{1,1}
- u_{1}\,t_{1,2}.
\end{equation}

\paragraph{Flat coordinates.} The flat coordinates $v_{n}$ are defined
through the function \begin{equation*} W(x) \;=\; P(x)^{1/\pp} - \tfrac{1}{\pp}\!\sum_{n=1}^{\pp-1} v_{n}\,P(x)^{-n/\pp} - x, \end{equation*} by requiring the coefficient of $x^{-k}$ in $W$ to vanish for $k=1,\dots,\pp-1$. This gives the substitution
\begin{equation}
u_{1}\to v_{1},\qquad u_{2}\to v_{2},\qquad
u_{3}\to \tfrac{v_{1}^{2}}{5}+v_{3},\qquad
u_{4}\to \tfrac{v_{1}v_{2}}{5}+v_{4}.
\label{eq:25-flatvar}
\end{equation}
At the Lee--Yang MLG slice $v_{2}=v_{3}=v_{4}=0$, the polynomial in flat coordinates becomes, with $v_{1}=v_{10}$,
\begin{equation}
P(x)\bigl|_{\rm MLG}\;=\;x^{5} + v_{10}\,x^{3} + \tfrac{v_{10}^{2}}{5}\,x.
\label{eq:25-Q-MLG}
\end{equation}
\begin{equation}
P(x)\bigl|_{\rm MLG,\,v_{10}=-5}\;=\;x^{5}-5x^{3}+5x\;=\;2\,T_{5}(x/2),
\label{eq:25-Q-Chebyshev}
\end{equation}
where $T_{5}$ is the Chebyshev polynomial of the first kind of degree $5$.\footnote{\label{fn:cheb-norm}This identity persists at higher $\pp$: at $(2,7)$ one finds $P(x)\bigl|_{\rm MLG,\,v_{10}=-7}=2T_{7}(x/2)$, and similarly at $(2,11)$. The general statement is that the FM-$x$ polynomial $P(x)$, expressed in flat coordinates and evaluated at the MLG slice $v_{2}=\dots=v_{\pp-1}=0$ with $v_{10}=-\pp$, coincides with $2T_{\pp}(x/2)$. This is the direct identification of the FM-$x$ polynomial in flat coordinates with the spectral curve of the swapped SC formulation, and serves as the explicit FM-$x$/SC-$x$ bridge.} This is the explicit FM-$x$/SC-$x$ bridge: the polynomial defining the Milnor ring in flat coordinates, evaluated at the MLG background, is precisely the spectral-curve polynomial of the swapped SC formulation.

\paragraph{String equation and structure constants.}
Demanding $\partial S/\partial v_{k}=0$ for $k=2,3,4$ at $v_{k\ge 2}=0$ and solving for the couplings gives
\begin{equation}
t_{1,1}\to \tfrac{v_{10}^{2}}{10},\qquad t_{1,2}\to 0.
\end{equation}
The Saito pairing on $A_{4}$ in flat coordinates is the antidiagonal $\eta_{ij}=-\delta_{i+j,5}$. The structure constants $C^{ij}_{k}=\eta^{ia}\eta^{jb}C_{abk}$, evaluated at the MLG slice, form constant-along-antidiagonal matrices: the entries of $\mathbf{C}_{k}$ lie on antidiagonals shifted from the main one, with values $(-v_{10}/5)^{\ell}$ for the $\ell$-th band. In particular,
\begin{equation}
\mathbf{C}_{4}\bigl|_{\rm MLG}={\rm diag}\!\left(\tfrac{v_{10}^{3}}{125},\,-\tfrac{v_{10}^{2}}{25},\,\tfrac{v_{10}}{5},\,-1\right).
\end{equation}

\paragraph{Partition function and amplitudes.} The partition function
is
\begin{equation}
Z_{0} \;=\;\tfrac{1}{2}\!\int_{0}^{v_{10}}\!
\sum_{i,j}(\mathbf{C}_{p-1})_{p-i,p-j}\,(\partial_{v_{i}}S)(\partial_{v_{j}}S)\,dv_{1}
\;=\;-\,\frac{v_{10}^{7}}{65625}.
\end{equation}
Two-point amplitudes (the diagonal of $Z_{12}$) are obtained from $\partial^{2}Z_{2}/\partial t_{1,k}^{2}$, where $Z_{2}=Z_{0}$ before the string-equation substitution. Off-diagonals vanish by parity. The results are
\begin{equation}
Z_{12}[1,1] = -\tfrac{v_{10}^{3}}{75},\qquad
Z_{12}[2,2] = -v_{10},\qquad
Z_{12}[1,2]=0.
\end{equation}
Three-point amplitudes are obtained from $Z_{3}=\sum_{i,j,\gamma}(\mathbf{C}_{\gamma})_{ij}\,v^{*}_{p-\gamma}\, (\partial_{v_{p-i}}S)(\partial_{v_{p-j}}S)$ where $v^{*}=M^{-1}\cdot (\partial_{v}S)|_{\rm root}$ is the linearized solution of the string equation, with $M_{\alpha\beta}=-\partial^{2}S/\partial v_{\alpha}\partial v_{\beta}$. Then $Z_{123}[k_{1},k_{2},k_{3}]=\partial^{3}Z_{3}/(\partial t_{1,k_{1}}\partial t_{1,k_{2}}\partial t_{1,k_{3}})/3!$ evaluated at the MLG background:
\begin{align}
Z_{123}[1,1,1] &= -\tfrac{v_{10}}{5}, &
Z_{123}[1,1,2] &= 0, \\
Z_{123}[1,2,2] &= -\tfrac{5}{v_{10}}, &
Z_{123}[2,2,2] &= -\tfrac{25}{v_{10}^{2}}.
\end{align}
The fusion-forbidden entry $Z_{123}[1,1,2]$ vanishes automatically in FM-$x$.

\paragraph{Universal three-point ratio.} The ratio \eqref{eq:R-def-main} takes the four inequivalent values $9/35$, $0$, $1/35$ and $1/105$ at $(1,1,1)$, $(1,1,2)$, $(1,2,2)$ and $(2,2,2)$, matching the CFT prediction \eqref{eq:R-CFT} (Table~\ref{tab:25-summary}).

\subsection*{FM-$y$}
\label{app:25-FMy}

In FM-$y$ at $(2,5)$ the Frobenius manifold is $A_{q-1}=A_{1}$, one-dimensional, with polynomial $Q(y)=y^{2}+u_{1}$. The active matrix-model couplings at $(2,5)$ are $t_{3,1}$ and $t_{4,1}$, satisfying the singular-sector condition $pn<qm$ with $(m,n)\in\{(3,1),(4,1)\}$. By the Kac symmetry $\phi_{1,k}\equiv\phi_{1,p-k}$, the FM-$y$ couplings correspond to FM-$x$ couplings under
\begin{equation}
t_{4,1}^{\rm FM\text{-}y}\,\longleftrightarrow\, t_{1,1}^{\rm FM\text{-}x}\quad(\text{cosmological/identity}),\qquad
t_{3,1}^{\rm FM\text{-}y}\,\longleftrightarrow\, t_{1,2}^{\rm FM\text{-}x}\quad(\phi_{1,2}).
\end{equation}
The cosmological constant is $t_{4,1}$, corresponding to $\phi_{1,4}\equiv\phi_{1,1}$.

\paragraph{Action and string equation.} The action is
\begin{equation}
S \;=\; -\,\tfrac{u_{1}^{4}}{24} \,-\, t_{3,1}\,u_{1} \,-\,\tfrac{t_{4,1}u_{1}^{2}}{2},
\end{equation}
and the single flat-coordinate substitution is $u_{1}\to v_{1}$. The string equation $\partial S/\partial v_{1}=0$ at $v_{1}=v_{10}$ solved for the cosmological coupling $t_{4,1}$ gives, at the physical background $t_{3,1}=0$,
\begin{equation}
t_{4,1}\;\longrightarrow\;-\frac{v_{10}^{2}}{6}.
\end{equation}

\paragraph{Amplitudes.} The structure-constant matrix
$\mathbf{C}_{1}$ is the scalar $-1$. The partition function is
\begin{equation}
Z_{0}\;=\;-\,\frac{v_{10}^{7}}{945}.
\end{equation}
Two-point diagonals (computed exactly as in FM-$x$, by twice differentiating $Z_{2}$ with respect to $t$ and substituting the string-equation solution):
\begin{equation}
Z_{12}[3,3]\;=\;-v_{10},\qquad
Z_{12}[4,4]\;=\;-\frac{v_{10}^{3}}{3}.
\end{equation}
Three-point amplitudes:
\begin{align}
Z_{123}[3,3,3] &= \tfrac{3}{v_{10}^{2}}, &
Z_{123}[3,3,4] &= \tfrac{3}{v_{10}}, \\
Z_{123}[3,4,4] &= 3, &
Z_{123}[4,4,4] &= 3\,v_{10}.
\end{align}

\paragraph{Universal terms.} The KPZ scaling at $(2,5)$ gives
$Z_{123}[k_{1},k_{2},k_{3}]\sim\mu^{(2s+3-\sum(k_{i}+2))/2}=\mu^{(7-\sum k_{i})/2}$ with $\mu\sim v_{10}^{2}$. Translating the FM-$y$ indices $(m_{i}=3,4)$ to FM-$x$ indices $(n_{i}=p-m_{i}=2,1)$, the $\mu$-power of each correlator and its universality status is:
\begin{center}
\begin{tabular}{lcccc}
\toprule
$(m_{1},m_{2},m_{3})$ & $(n_{1},n_{2},n_{3})$ & $\mu$-power of $Z_{123}$ & Non-negative integer & Universal \\
\midrule
$(4,4,4)$ & $(1,1,1)$ & $1/2$  & No  & Yes \\
$(3,4,4)$ & $(2,1,1)$ & $0$    & Yes & No \\
$(3,3,4)$ & $(2,2,1)$ & $-1/2$ & No  & Yes \\
$(3,3,3)$ & $(2,2,2)$ & $-1$   & No  & Yes \\
\bottomrule
\end{tabular}
\end{center}
The single non-universal entry, $Z_{123}[3,4,4]=3\sim\mu^{0}$, is discarded. It corresponds precisely to the fusion-forbidden $(n_{1},n_{2},n_{3})=(1,1,2)$ position of FM-$x$. The same filter applied to the one-point amplitudes removes the non-zero $Z_{1}[t_{3,1}]=v_{10}^{4}/24\sim\mu^{2}$, restoring the conformal selection rule $\langle\phi_{1,2}\rangle=0$.

\paragraph{Universal three-point ratio.} The three universal entries give
$R^{\text{FM-}y}=\{9/35,\tfrac1{35},\tfrac1{105}\}$, agreeing with the other corners and $R^{\rm CFT}$ (Table~\ref{tab:25-summary}). FM-$y$ is a Frobenius normalisation ($\varkappa=1$), and it and SC-$y$ are the same $y$-side KdV tau-function (with $Z_{0}=10u_{0}^{7}/21$), related by the resonance \eqref{eq:taudef}, which at $s=2$ is linear. The nonlinearity first appears at $s=3$ (App.~\ref{app:27-FMy}).

\subsection*{SC-$y$}
\label{app:25-SCy}

\paragraph{Spectral curve.} For Lee--Yang $(2,\pp)$ the standard
Chebyshev curve is
\begin{equation}
x(z) = 2 u_{0}\, T_{2}(z) = 4 u_{0} z^{2} - 2 u_{0}, \qquad
y(z) = 2 u_{0}^{p/2}\, T_{p}(z).
\end{equation}
At $(2,5)$:
\begin{equation}
y(z) = 2 u_{0}^{5/2}(16 z^{5} - 20 z^{3} + 5 z), \qquad
x'(z) = 8 u_{0}\, z.
\end{equation}
The map $z\mapsto x$ is two-to-one branched at $z=0$, with Galois involution $\sigma:z\mapsto -z$.

\paragraph{Topological recursion inputs.}
\begin{equation}
\omega_{0,1}(z) = y(z)\, x'(z)
= u_{0}^{7/2}\bigl(256 z^{6} - 320 z^{4} + 80 z^{2}\bigr),
\qquad
\omega_{0,2}(z_{1},z_{2}) = \frac{1}{(z_{1}-z_{2})^{2}}.
\end{equation}
Eynard--Orantin recursion at $z=0$ gives
\begin{equation}
\omega_{0,3}(z_{1},z_{2},z_{3})
= \frac{1}{80\, u_{0}^{7/2}\, z_{1}^{2} z_{2}^{2} z_{3}^{2}}.
\end{equation}

\paragraph{KdV-time amplitudes (raw $t$-basis).}
Following the standard Eynard--Orantin prescription, the matter-side $n$-point amplitudes in the matrix-model time variables $t_{k}$ are obtained as residues at $z\to\infty$:
\begin{equation}\label{eq:25-SCy-residue-formula}
Z_{k_{1}\dots k_{n}}^{(t)}
= -\frac{1}{\prod_{i}(p-2k_{i})}\,
\mathrm{Res}_{z_{1},\dots,z_{n}=\infty}
\Bigl[\omega_{0,n}(z_{1},\dots,z_{n})
\prod_{i=1}^{n} x(z_{i})^{(p-2k_{i})/2}\,dz_{i}\Bigr],
\end{equation}
where $k_{i}\in\{1,\dots,s\}$ labels the operators. All branches are taken at $z<0$ (the negative-diagonal convention also used at $(2,7)$, \S\ref{app:27-SCy}). At $(2,5)$, computing these gives
\begin{align}
Z_{1}^{(t)} &= \Bigl(\tfrac{2 u_{0}^{5}}{3},\; -\tfrac{5 u_{0}^{4}}{4}\Bigr), \\[4pt]
Z_{12}^{(t)} &=
\begin{pmatrix} -\tfrac{u_{0}^3}{3} & \tfrac{u_0^2}{2} \\[2pt] \tfrac{u_0^2}{2} & -u_{0} \end{pmatrix},\\[4pt]
Z_{123}^{(t)}[1,1,1] &= -\tfrac{u_{0}}{10}, &
Z_{123}^{(t)}[1,1,2] &= \tfrac{1}{10}, \\
Z_{123}^{(t)}[1,2,2] &= -\tfrac{1}{10\, u_{0}}, &
Z_{123}^{(t)}[2,2,2] &= \tfrac{1}{10\, u_{0}^{2}}.
\end{align}
The partition function is obtained by integrating the first derivative  $Z_1^{(t)}[1]=2u_0^5/3$ with respect to $t_1$ and using $t_{1}^{*}(u_{0}) = -5 u_{0}^{2}/2$, which follows from~\eqref{kdvdef}
\begin{equation}
Z_{0} = \int_{0}^{u_{0}} \frac{\partial t_{1}^{*}}{\partial u_{0}'}\, Z_{1}^{(t)}[1]\big|_{u_{0}\to u_{0}'}\, du_{0}'
= \frac{10\, u_{0}^{7}}{21}.
\end{equation}

\paragraph{Chain rule to physical $\tau$-couplings.}
At $(2,5)$ the BZ/Artemev resonance map \eqref{eq:taudef} is purely linear (no $\tau$-mixing, since $s=2$):
\begin{equation}\label{eq:25-tau-t-relation}
t_{1} = (2s+1)\,u_{0}^{2}\,\tau_{1},
\qquad
t_{2} = (2s+1)\,u_{0}^{3}\,\tau_{2},
\end{equation}
i.e.\ $t_{k} = 5\, u_{0}^{k+1}\,\tau_{k}$. The chain rule for the $n$-th derivative is therefore multiplicative,
\begin{equation}
\frac{\partial^{n}F_{0}}
     {\partial\tau_{k_{1}}\cdots\partial\tau_{k_{n}}}
= \prod_{i=1}^{n}\bigl[(2s+1)\,u_{0}^{k_{i}+1}\bigr]\cdot
  Z_{k_{1}\dots k_{n}}^{(t)}
= (2s+1)^{n}\, u_{0}^{\sum_{i}(k_{i}+1)}\, Z_{k_{1}\dots k_{n}}^{(t)}.
\end{equation}
Each operator insertion additionally carries the per-insertion factor $\varkappa=-2$ of \eqref{eq:kappa-value}, which converts the spectral-curve normalisation to the Frobenius one. Artemev's amplitude (his eq.~\eqref{eq:agndef}), dressed with this factor, is then
\begin{equation}
A^{0}_{n}(k_{1},\dots,k_{n})
= \varkappa^{n}\,u_{0}^{-\sum_{i}(k_{i}+1)}\cdot
\frac{\partial^{n}F_{0}}
     {\partial\tau_{k_{1}}\cdots\partial\tau_{k_{n}}}
= \bigl(\varkappa\,(2s+1)\bigr)^{n}\, Z_{k_{1}\dots k_{n}}^{(t)},
\end{equation}
while the partition function $A^{0}_{0}=\varkappa\,Z_{0}$ inherits a single factor of $\varkappa$ through the dilaton relation $Z_{0}=\int(\partial_{u_{0}}t_{1}^{*})\,Z_{1}^{(t)}[1]\,du_{0}$. At $(2,5)$, $\varkappa(2s+1) = -10$, so the chain factors are $-10,\,100,\,-1000$:
\begin{align}
A^{0}_{3}[1,1,1] &= (-1000)\bigl(-\tfrac{u_{0}}{10}\bigr) = 100\,u_{0}, &
A^{0}_{3}[1,1,2] &= (-1000)\bigl(\tfrac{1}{10}\bigr) = -100, \\
A^{0}_{3}[1,2,2] &= \tfrac{100}{u_{0}}, &
A^{0}_{3}[2,2,2] &= -\tfrac{100}{u_{0}^{2}}, \\
A^{0}_{2}[1,1] &= 100\bigl(-\tfrac{u_{0}^{3}}{3}\bigr) = -\tfrac{100\, u_{0}^{3}}{3}, &
A^{0}_{2}[2,2] &= -100\, u_{0}.
\end{align}

\paragraph{Singular projection (universality filter).}
The singular projection of an amplitude is, by convention \cite{Belavin:2008kv,Artemev}, the part proportional to negative or odd-positive integer powers of $u_{0}$. Non-negative even integer powers are non-universal contributions to the partition function and are discarded:
\begin{equation}\label{eq:25-sing-rule}
A^{0}_{n}\bigl|_{\text{sing}}
= \begin{cases} A^{0}_{n} & \text{if power of $u_{0}$ is $<0$ or odd $>0$,}\\
                0 & \text{otherwise (power $=0$ or even $>0$).} \end{cases}
\end{equation}
Applied to the four inequivalent entries of $A^{0}_{3}$ at $(2,5)$:
\begin{center}
\begin{tabular}{lcccl}
\toprule
$(k_{1},k_{2},k_{3})$ & $A^{0}_{3}$ & $u_{0}$-power & Universal? & After sing.\ projection \\
\midrule
$(1,1,1)$ & $\;100\, u_{0}$        & $+1$ & yes (odd-positive) & $100\, u_{0}$ \\
$(1,1,2)$ & $-100$                  & $0$  & no (zero)    & $0$ \\
$(1,2,2)$ & $\;\tfrac{100}{u_{0}}$         & $-1$ & yes (negative)      & $\tfrac{100}{u_{0}}$ \\
$(2,2,2)$ & $-\tfrac{100}{u_{0}^{2}}$      & $-2$ & yes (negative)      & $-\tfrac{100}{u_{0}^{2}}$ \\
\bottomrule
\end{tabular}
\end{center}
The single non-universal entry is $(1,1,2)$, which corresponds precisely (under the Kac dictionary $\phi_{1,k}\equiv\phi_{1,p-k}$) to the fusion-forbidden FM-$x$ triple $(1,1,2)$. At $(2,5)$ it sits at the projected-out power $u_{0}^{0}$, so it is removed by the universality filter alone, independently of $\varkappa$ (at $(2,7)$, \S\ref{app:27-SCy}, the analogous forbidden entry instead sits at a surviving odd power and needs $\varkappa=-2$).

\paragraph{Fusion matrices from the normalised amplitudes.}
All three-point amplitudes carry a common overall factor: the square of the two-point normalisation, $N=\bigl(\varkappa(2s+1)\bigr)^{2}=100$ (indeed $A^{0}_{3}(1,1,1)|_{u_{0}=1}=N$). The reduced amplitudes $A^{0}_{3}|_{\text{sing}}/N$ are therefore integers at $u_{0}=1$; grouping them by the third index $k_{3}$ gives, for each primary, the matrix $\bigl(\mathcal N_{k_{3}}\bigr)_{k_{1}k_{2}} =A^{0}_{3}(k_{1},k_{2},k_{3})|_{\text{sing}}/N$,
\begin{equation}\label{eq:25-A03-verlinde}
\frac{A^{0}_{3}|_{\text{sing}}}{\varkappa^{2}(2s+1)^{2}}\bigg|_{u_{0}=1}
= \underbrace{\begin{pmatrix}1 & 0\\ 0 & 1\end{pmatrix}}_{\mathcal N_{1}}\,\delta_{k_{3},1}
\;+\;
\underbrace{\begin{pmatrix}0 & 1\\ 1 & -1\end{pmatrix}}_{\mathcal N_{2}}\,\delta_{k_{3},2}.
\end{equation}
These are precisely the fusion matrices of the two $(2,5)$ primaries: $\mathcal N_{1}=\mathbb{1}$ (the identity $\phi_{1,1}$ fuses trivially), and $\mathcal N_{2}$, the fusion matrix of $\phi_{1,2}$, whose entry $(\mathcal N_{2})_{22}=-1$ is the signed Verlinde number characteristic of the non-unitary Lee--Yang model. The spectral-curve amplitudes thus reproduce the signed Verlinde algebra, sign included.

\paragraph{Universal three-point ratio.}
The universal ratio combines the three-, two- and zero-point amplitudes into a quantity that is invariant under any rescaling of the operators $O_{k}\to c_{k}O_{k}$ (hence scheme-independent) and carries zero $u_{0}$-dimension (hence a pure number):
\begin{equation}\label{eq:25-R-SCy-formula}
R^{\text{SC-}y}(k_{1},k_{2},k_{3})
:= \frac{(A^{0}_{3}|_{\text{sing}})^{2}\,A^{0}_{0}}
{A^{0}_{2}[k_{1},k_{1}]\,A^{0}_{2}[k_{2},k_{2}]\,A^{0}_{2}[k_{3},k_{3}]}\,.
\end{equation}
It is not, however, blind to the per-insertion factor $\varkappa$. An $n$-point amplitude carries $\varkappa^{n}$ (with $A^{0}_{0}=\varkappa Z_{0}$), so the ratio scales as $(\varkappa^{3})^{2}\,\varkappa/(\varkappa^{2})^{3}=\varkappa$: a single net power of $\varkappa$ survives. Thus $R$ discards the operator normalisation but retains the one physical per-insertion factor, and this is what pins it down: at the naive $\varkappa=1$ the two spectral-curve corners would give $\pm\tfrac12 R^{\text{CFT}}$, whereas $\varkappa=-2$ makes $R^{\text{SC-}y}$ equal $R^{\text{CFT}}$ exactly. The $u_{0}$-powers cancel triple by triple, and the four inequivalent entries are
\begin{equation}\label{eq:25-R-SCy}
R^{\text{SC-}y}=\Bigl\{\tfrac{9}{35},\,0,\,\tfrac{1}{35},\,\tfrac{1}{105}\Bigr\}
\quad\text{for}\quad
(k_{1},k_{2},k_{3})=(1,1,1),(1,1,2),(1,2,2),(2,2,2),
\end{equation}
in exact agreement with FM-$x$, FM-$y$, SC-$x$ and the CFT prediction.

\subsection*{SC-$x$}
\label{app:25-SCx}

\paragraph{Spectral curve.} For Lee--Yang $(2,\pp)$ the swapped
Chebyshev curve is
\begin{equation}\label{eq:25-SCx-curve}
\check x(z) = 2\,u_{0}^{p/2}\,T_{p}(z), \qquad
\check y(z) = 2\,u_{0}\,T_{2}(z) = 4\,u_{0}\,z^{2} - 2\,u_{0}.
\end{equation}
At $(2,5)$, $p=5$:
\begin{equation}
\check x(z) = 2\,u_{0}^{5/2}\bigl(16 z^{5} - 20 z^{3} + 5 z\bigr), \qquad
\check y(z) = 4\,u_{0}\,z^{2} - 2\,u_{0}.
\end{equation}
The map $z\mapsto\check x$ has degree $p=5$, branched at the $p-1 = 4$ interior ramification points $\zeta_{m}=\cos(\pi m/p)$, $m=1,\dots,4$:
\begin{equation}
\zeta_{1}=\tfrac{1+\sqrt{5}}{4},\quad
\zeta_{2}=\tfrac{-1+\sqrt{5}}{4},\quad
\zeta_{3}=-\zeta_{2},\quad
\zeta_{4}=-\zeta_{1}.
\end{equation}
Near each $\zeta_{m}$ the local Galois involution $\bar{z}^{(m)}(z)$ satisfies $\bar z^{(m)}(\zeta_{m})=\zeta_{m}$ and $d\bar z^{(m)}/dz\big|_{z=\zeta_{m}}=-1$; the higher-order behaviour of $\bar z^{(m)}$ is not needed (see below).

\paragraph{Topological recursion inputs.}
\begin{equation}
\check\omega_{0,1}(z) = \check y(z)\,\check x'(z),
\qquad
\check\omega_{0,2}(z_{1},z_{2}) = \frac{1}{(z_{1}-z_{2})^{2}}.
\end{equation}
The key algebraic simplification at the swap is
\begin{equation}\label{eq:25-SCx-ykey}
\check y(z) - \check y(\bar z^{(m)})
= 4\,u_{0}\bigl(z-\bar z^{(m)}\bigr)\bigl(z+\bar z^{(m)}\bigr),
\end{equation}
in which the factor $(z-\bar z^{(m)})$ cancels with the corresponding factor from $\int_{\bar z}^{z}\check\omega_{0,2}(z_{0},\cdot)$ in the recursion kernel. The Eynard--Orantin recursion, with residues collected at the four $\zeta_{m}$, gives the closed form
\begin{equation}\label{eq:25-SCx-omega03}
\check\omega_{0,3}(z_{1},z_{2},z_{3})
= -\frac{1}{8\,u_{0}^{7/2}}\sum_{m=1}^{4}
\frac{c_{m}}{(z_{1}-\zeta_{m})^{2}(z_{2}-\zeta_{m})^{2}(z_{3}-\zeta_{m})^{2}},\quad
c_{m} = \frac{(-1)^{m}}{50}\,\frac{\sin^{2}(\pi m/5)}{\cos(\pi m/5)}.
\end{equation}
Explicitly, $c_{1}=c_{4}=\tfrac{1}{40}-\tfrac{3\sqrt{5}}{200}$ and $c_{2}=c_{3}=\tfrac{1}{40}+\tfrac{3\sqrt{5}}{200}$.

\paragraph{KP-time amplitudes ($\check t$-basis).}
The SC-$x$ amplitudes in the KP-time variables $\check t_{k}$ are obtained from Artemev's residue formula~\eqref{eq:SCx-kpderiv} with fractional powers of $\check x$:
\begin{equation}\label{eq:25-SCx-residue-formula}
\check Z_{k_{1}\dots k_{n}}^{(\check t)}
= -\frac{1}{\prod_{i}(p-2k_{i})}\,
\mathrm{Res}_{z_{1},\dots,z_{n}=\infty}
\Bigl[\check\omega_{0,n}(z_{1},\dots,z_{n})
\prod_{i=1}^{n}\check x(z_{i})^{(p-2k_{i})/p}\,dz_{i}\Bigr],
\end{equation}
where $k_{i}\in\{1,\dots,s\}$ labels the operators. At $(2,5)$ this yields
\begin{align}
\check Z_{1}^{(\check t)} &= \Bigl(\tfrac{2\,u_{0}^{5}}{3},\;0\Bigr), \\[4pt]
\check Z_{12}^{(\check t)} &=
\begin{pmatrix} -\tfrac{u_{0}^{3}}{3} & 0 \\[2pt] 0 & -u_{0} \end{pmatrix},\\[4pt]
\check Z_{123}^{(\check t)}[1,1,1] &= -\tfrac{u_{0}}{10}, &
\check Z_{123}^{(\check t)}[1,1,2] &= 0, \\
\check Z_{123}^{(\check t)}[1,2,2] &= -\tfrac{1}{10\,u_{0}}, &
\check Z_{123}^{(\check t)}[2,2,2] &= \tfrac{1}{10\,u_{0}^{2}}.
\end{align}
Three structural features distinguish the swapped side from SC-$y$:
\begin{enumerate}
\item Only $\check t_{1}$ has a nonzero background value. From
\begin{equation}
\check t_{k}^{0} = \tfrac{1}{2}\,\mathrm{Res}_{z=\infty}\bigl[
\check\omega_{0,1}(z)\,\check x(z)^{(-p+2k)/p}\bigr],
\end{equation}
one finds $\check t_{1}^{0}(u_{0}) = 5\,u_{0}^{2}/2$ and $\check t_{2}^{0}=0$. This is the ``no resonance'' feature of the swap: the dual Liouville coupling and the dual KP-time agree, modulo the single $\check t_{1}$ background value, with no $\tau$-mixing required.
\item The two-point matrix $\check Z_{12}^{(\check t)}$ is diagonal:
no off-diagonal mixing to be removed.
\item The fusion-forbidden three-point entry $(1,1,2)$ vanishes
automatically, with no singular projection needed.
\end{enumerate}

The partition function is obtained by integrating $\check Z_{1}^{(\check t)}[1] = 2 u_{0}^{5}/3$ along $u_{0}$ via the chain rule $\partial\check{\mathcal F}_{0}/\partial u_{0} = (\partial\check t_{1}^{0}/\partial u_{0})\,\check Z_{1}^{(\check t)}[1]$:
\begin{equation}\label{eq:25-SCx-Z0}
\check Z_{0} = \int_{0}^{u_{0}}
\frac{\partial \check t_{1}^{0}}{\partial u_{0}'}\,
\check Z_{1}^{(\check t)}[1]\big|_{u_{0}\to u_{0}'}\,du_{0}'
= \int_{0}^{u_{0}} 5\,u_{0}'\cdot\tfrac{2\,u_{0}'^{5}}{3}\,du_{0}'
= \frac{10\,u_{0}^{7}}{21},
\end{equation}
which agrees with $Z_{0}$ on the SC-$y$ side, as required by the underlying free-energy identification.

\paragraph{Universal three-point ratio.}
The swapped side needs no resonance and no singular projection, but it carries the same per-insertion factor $\varkappa=-2$ \eqref{eq:kappa-value} as SC-$y$, since both are spectral-curve (rather than Frobenius) normalisations. With $\check A^{0}_{n}= \varkappa^{n}\check Z^{(\check t)}_{k_{1}\dots k_{n}}$ and $\check A^{0}_{0}=\varkappa\check Z_{0}$, the universal ratio is
\begin{equation}\label{eq:25-R-SCx-formula}
R^{\text{SC-}x}(k_{1},k_{2},k_{3})
:= \frac{\bigl(\check A^{0}_{3}\bigr)^{2}\,\check A^{0}_{0}}
{\check A^{0}_{2}[k_{1},k_{1}]\,
\check A^{0}_{2}[k_{2},k_{2}]\,
\check A^{0}_{2}[k_{3},k_{3}]}
= \varkappa\,
\frac{\bigl(\check Z_{123}^{(\check t)}\bigr)^{2}\,\check Z_{0}}
{\check Z_{12}^{(\check t)}[k_{1},k_{1}]\,
\check Z_{12}^{(\check t)}[k_{2},k_{2}]\,
\check Z_{12}^{(\check t)}[k_{3},k_{3}]}\,,
\end{equation}
again with a single net $\varkappa$ from $\check A^{0}_{0}$. The $u_{0}$-powers cancel triple-by-triple, and the four inequivalent entries are
\begin{equation}\label{eq:25-R-SCx}
R^{\text{SC-}x}=\Bigl\{\tfrac{9}{35},\,0,\,\tfrac{1}{35},\,\tfrac{1}{105}\Bigr\}
\quad\text{for}\quad
(k_{1},k_{2},k_{3})=(1,1,1),(1,1,2),(1,2,2),(2,2,2),
\end{equation}
with the fusion-forbidden $(1,1,2)$ entry vanishing automatically (the two-point matrix is already diagonal and $\check Z_{123}^{(\check t)}[1,1,2]=0$ on the swapped curve).

With the common factor $\varkappa=-2$,
\begin{equation}\label{eq:25-R-SCx-vs-SCy}
R^{\text{SC-}x} = R^{\text{SC-}y} = R^{\text{CFT}}
\qquad
\text{(on every fusion-allowed triple).}
\end{equation}

\subsection*{Summary}
\label{app:25-summary}

\paragraph{Comparison of the four formulations.} Table~\ref{tab:25-summary}
collects the universal three-point ratio $R(k_{1},k_{2},k_{3})$ at $(2,5)$ in the four corners of the diagram. With the per-insertion factor $\varkappa=-2$ on the two spectral-curve sides \eqref{eq:kappa-value}, all four corners agree exactly with the CFT prediction, on every fusion-allowed triple.

\begin{table}[h]
\centering
\renewcommand{\arraystretch}{1.3}
\begin{tabular}{lcccc}
\toprule
$(n_{1},n_{2},n_{3})$ & FM-$x$ & FM-$y$ & SC-$x$ & SC-$y$ \\
 & ($\varkappa=1$) & ($\varkappa=1$) & ($\varkappa=-2$) & ($\varkappa=-2$) \\
\midrule
$(1,1,1)$ & $9/35$  & $9/35$               & $9/35$               & $9/35$  \\
$(1,1,2)$ & $0$     & $0$ (sing.\ proj.)   & $0$ (automatic)      & $0$ (sing.\ proj.) \\
$(1,2,2)$ & $1/35$  & $1/35$               & $1/35$               & $1/35$  \\
$(2,2,2)$ & $1/105$ & $1/105$              & $1/105$              & $1/105$ \\
\midrule
CFT prediction & \multicolumn{4}{c}{$\bigl\{\,9/35,\;0,\;1/35,\;1/105\,\bigr\}$} \\
\bottomrule
\end{tabular}
\caption{Universal three-point ratio $R$ at $(2,5)$ in the four corners
of the diagram. The Frobenius sides (FM-$x$, FM-$y$) use $\varkappa=1$;
the spectral-curve sides (SC-$x$, SC-$y$) carry the per-insertion
conversion factor $\varkappa=-2$. All four agree with the CFT values.}
\label{tab:25-summary}
\end{table}

At $\varkappa=1$ the two SC corners would read $\pm\tfrac12 R^{\text{CFT}}$; the single net power of $\varkappa$ in the ratio (from $A^{0}_{0}=\varkappa Z_{0}$) fixes both the $\tfrac12$ and the sign.

\paragraph{Generating functions.}
Collecting the partition function, one-point amplitudes, two-point matrix and three-point amplitudes from each of the four subsections, the genus-zero free energy at $(2,5)$ reads
\begin{equation}\label{eq:25-F0-master}
F_{0}(t)\;=\;Z_{0}\;+\;\sum_{a}Z_{1}[a]\,\delta t_{a}
\;+\;\tfrac{1}{2}\sum_{a,b}Z_{12}[a,b]\,\delta t_{a}\delta t_{b}
\;+\;\tfrac{1}{6}\sum_{a,b,c}Z_{123}[a,b,c]\,\delta t_{a}\delta t_{b}\delta t_{c}
\;+\;\mathcal{O}(\delta t^{4}),
\end{equation}
where $\delta t_{a}:=t_{a}-t_{a}^{*}$ is the deviation from the MLG background.

\noindent The two FM generating functions are obtained the same way; at the MLG background $Z_{0}^{\text{FM-}x}=-v_{10}^{7}/65625$ and $Z_{0}^{\text{FM-}y}=-v_{10}^{7}/945$, and their filtered forms are related by the dilatation \eqref{eq:25-FM-relation}. The two spectral-curve generating functions are

\noindent\textbf{SC-$y$} (variables $t_{1}, t_{2}$; background $t_{1}^{*}=-5\,u_{0}^{2}/2$, $t_{2}^{*}=0$; the universality filter would discard $Z_{1}^{(t)}[2]\,t_{2}$, $Z_{12}^{(t)}[1,2]\,(t_{1}+5 u_{0}^{2}/2)t_{2}$ and $Z_{123}^{(t)}[1,1,2]\,(t_{1}+5 u_{0}^{2}/2)^{2}t_{2}$, marked $(\star)$):
\begin{equation}\label{eq:25-F0-SCy}
\begin{aligned}
F_{0}^{\text{SC-}y} ={}& \frac{10\,u_{0}^{7}}{21}
+\tfrac{2 u_{0}^{5}}{3}\bigl(t_{1}+\tfrac{5 u_{0}^{2}}{2}\bigr)
- \tfrac{5 u_{0}^{4}}{4}\,t_{2}\;\;(\star) \\
&{} - \tfrac{1}{2}\,\tfrac{u_{0}^{3}}{3}\bigl(t_{1}+\tfrac{5 u_{0}^{2}}{2}\bigr)^{2}
+ \tfrac{u_{0}^{2}}{2}\bigl(t_{1}+\tfrac{5 u_{0}^{2}}{2}\bigr)t_{2}\;\;(\star)
- \tfrac{1}{2}\,u_{0}\,t_{2}^{\,2} \\
&{} + \tfrac{1}{6}\!\Bigl[-\tfrac{u_{0}}{10}\bigl(t_{1}+\tfrac{5 u_{0}^{2}}{2}\bigr)^{3}
+\tfrac{1}{10\,u_{0}^{2}}\,t_{2}^{\,3}\Bigr] \\
&{} + \tfrac{1}{2}\cdot\tfrac{1}{10}\bigl(t_{1}+\tfrac{5 u_{0}^{2}}{2}\bigr)^{2}t_{2}\;\;(\star)
- \tfrac{1}{2}\,\tfrac{1}{10\,u_{0}}\bigl(t_{1}+\tfrac{5 u_{0}^{2}}{2}\bigr)t_{2}^{\,2}
+ \mathcal{O}(t^{4}).
\end{aligned}
\end{equation}

\noindent\textbf{SC-$x$} (variables $\check t_{1}, \check t_{2}$; background $\check t_{1}^{*}=5\,u_{0}^{2}/2$, $\check t_{2}^{*}=0$; no $(\star)$ terms appear: all entries marked $(\star)$ on the other sides vanish automatically here):
\begin{equation}\label{eq:25-F0-SCx}
\begin{aligned}
F_{0}^{\text{SC-}x} ={}& \frac{10\,u_{0}^{7}}{21}
+ \frac{2\,u_{0}^{5}}{3}\bigl(\check t_{1}-\tfrac{5 u_{0}^{2}}{2}\bigr)
- \tfrac{1}{2}\,\tfrac{u_{0}^{3}}{3}\bigl(\check t_{1}-\tfrac{5 u_{0}^{2}}{2}\bigr)^{2}
- \tfrac{1}{2}\,u_{0}\,\check t_{2}^{\,2} \\
&{} + \tfrac{1}{6}\!\Bigl[-\tfrac{u_{0}}{10}\bigl(\check t_{1}-\tfrac{5 u_{0}^{2}}{2}\bigr)^{3}
+\tfrac{1}{10\,u_{0}^{2}}\,\check t_{2}^{\,3}\Bigr]
- \tfrac{1}{2}\,\tfrac{1}{10\,u_{0}}\bigl(\check t_{1}-\tfrac{5 u_{0}^{2}}{2}\bigr)\,\check t_{2}^{\,2}
+ \mathcal{O}(\check t^{4}).
\end{aligned}
\end{equation}

The expressions \eqref{eq:25-F0-SCy}--\eqref{eq:25-F0-SCx} are the unfiltered spectral-curve generating functions.

The conjecture is that $F_{0}^{\text{SC/FM-}y}$ and $F_{0}^{\text{SC/FM-}x}$ are related by a Kontsevich-type kernel transformation $\mathcal{K}$ acting on the generating function itself, $F_{0}^{\text{SC/FM-}x} = \mathcal{K}\bigl[ F_{0}^{\text{SC/FM-}y}\bigr]$, modulo $(\star)$ terms.

\paragraph{Relations between the four generating functions.}
On FM side, we obtain after the universality filter applied to $F_{0}^{\text{FM-}y}$:
\begin{equation}\label{eq:25-FM-relation}
\;F_{0}^{\text{FM-}y,\text{filt}}(t_{4,1},t_{3,1})
\;=\; \frac{625}{9}\,
F_{0}^{\text{FM-}x}\!\Bigl(-\tfrac{3}{5}\,t_{4,1},\;-\tfrac{3}{25}\,t_{3,1}\Bigr).
\end{equation}

On the SC side, in the common deviation-from-background variables $\delta t_{a} := t_{a} - t_{a}^{*}$ (with $t_{1}^{*}=-5u_{0}^{2}/2$ for SC-$y$ and $\check t_{1}^{*}=+5u_{0}^{2}/2$ for SC-$x$), the two pre-filter free energies differ only by the $(\star)$-marked non-universal terms of $F_{0}^{\text{SC-}y}$:
\begin{equation}\label{eq:25-SC-sumZ0}
F_{0}^{\text{SC-}y}(\delta t_{1},\delta t_{2})
- F_{0}^{\text{SC-}x}(\delta t_{1},\delta t_{2})
= -\tfrac{5 u_{0}^{4}}{4}\delta t_{2}
+ \tfrac{u_{0}^{2}}{2}\delta t_{1}\delta t_{2}
+ \tfrac{1}{20}\delta t_{1}^{\,2}\delta t_{2}\,,
\end{equation}
i.e.\ their universal (fusion-allowed) parts coincide identically. This is the $(2,5)$ instance of the genus-zero SC-$y\leftrightarrow$SC-$x$ kernel \eqref{eq:27-SCkernel}: the rescaling of the flow times is trivial here ($\Lambda=1$, since $Z_{12}^{(t)}[1,1]=\check Z_{12}^{(\check t)}[1,1]=-u_{0}^{3}/3$), whereas at $(2,7)$ it is the dilatation $\Lambda=\mathrm{diag}(\tfrac32,1,1)$.

\section{Explicit calculation for $(2,7)$}
\label{app:27-explicit}

In this appendix we consider the $(q,p)=(2,7)$ model ($s=3$, three matter primaries $\phi_{1,1},\phi_{1,2},\phi_{1,3}$). The cosmological background is $v_{1}=v_{10}$ on the FM sides and $u=u_{0}$ on the SC sides. The universality filter discards correlators proportional to non-negative even integer powers of $u_{0}$. The new feature relative to $(2,5)$ is that the resonance is genuinely nonlinear and the per-insertion factor $\varkappa=-2$ does real work: it cancels a fusion-forbidden three-point amplitude that no longer sits at a projected-out power.

\subsection*{FM-$x$}\label{app:27-FMx}

In FM-$x$ at $(2,7)$ the relevant Frobenius manifold is $A_{p-1}=A_{6}$, with polynomial
\begin{equation}
P(x) \;=\; x^{7} + u_{1}x^{5} + u_{2}x^{4} + u_{3}x^{3} + u_{4}x^{2}
+ u_{5}x + u_{6}.
\end{equation}
Active couplings are $t_{1,1},t_{1,2},t_{1,3}$, the three physical Liouville couplings. With $c(x)=\Gamma(x-\lfloor x\rfloor)/\Gamma(x+1)$, the action, which is the coefficient of $x^{-1}$ in the Laurent expansion of $-c(\tfrac{\pp+2}{\pp})P^{(\pp+2)/\pp} -\sum_{n=1}^{s}c(\tfrac{\pp-2n}{\pp})P^{(\pp-2n)/\pp}\,t_{1,n}$, evaluates after Gamma-function identities to a rational polynomial,
\begin{equation}
\begin{aligned}
S ={}& \tfrac{19}{686}u_{1}^{5} - \tfrac{15}{49}u_{1}^{2}u_{2}^{2}
- \tfrac{10}{49}u_{1}^{3}u_{3} + \tfrac{5}{14}u_{2}^{2}u_{3}
+ \tfrac{5}{14}u_{1}u_{3}^{2} + \tfrac{5}{7}u_{1}u_{2}u_{4}
- \tfrac12 u_{4}^{2} + \tfrac{5}{14}u_{1}^{2}u_{5}
- u_{3}u_{5} - u_{2}u_{6}\\
&{} + \bigl(-\tfrac{3}{49}u_{1}^{3}+\tfrac17 u_{2}^{2}+\tfrac27 u_{1}u_{3}-u_{5}\bigr)t_{1,1}
+ \bigl(\tfrac27 u_{1}^{2}-u_{3}\bigr)t_{1,2}
- u_{1}\,t_{1,3}.
\end{aligned}
\end{equation}

\paragraph{Flat coordinates.}
Requiring the coefficient of $x^{-k}$ in $P^{1/\pp}-\tfrac1\pp\sum_{n}v_{n}P^{-n/\pp}-x$ to vanish for $k=1,\dots,6$ gives
\begin{align}
&u_{1}\!\to v_{1},\;\;
u_{2}\!\to v_{2},\;\;
u_{3}\!\to \tfrac{2v_{1}^{2}}{7}+v_{3},\;\;
u_{4}\!\to \tfrac{3v_{1}v_{2}}{7}+v_{4},\;\;\\
&u_{5}\!\to \tfrac{v_{1}^{3}+14 v_{1}v_{3}+7v_{2}^{2}}{49}+v_{5},\;\;
u_{6}\!\to \tfrac{v_{1}^{2}v_{2}+7v_{2}v_{3}+7v_{1}v_{4}}{49}+v_{6}.
\end{align}
At the MLG slice $v_{2}=\dots=v_{6}=0$,
\begin{equation}\label{eq:27-Q-MLG}
P(x)\big|_{\rm MLG}=x^{7}+v_{10}x^{5}+\tfrac{2v_{10}^{2}}{7}x^{3}
+\tfrac{v_{10}^{3}}{49}x,
\qquad
P(x)\big|_{\rm MLG,\,v_{10}=-7}=2\,T_{7}(x/2),
\end{equation}
the explicit FM-$x$/SC-$x$ bridge of footnote~\ref{fn:cheb-norm}.

\paragraph{String equation and structure constants.}
Solving $\partial S/\partial v_{k}=0$ ($k=2,\dots,6$) at $v_{k\ge2}=0$ gives
\begin{equation}
t_{1,1}\to\tfrac{v_{10}^{2}}{14},\qquad t_{1,2}\to0,\qquad t_{1,3}\to0.
\end{equation}
The Saito pairing is the antidiagonal $\eta_{ij}=-\delta_{i+j,7}$, and the top structure-constant matrix at the MLG slice is diagonal,
\begin{equation}
\mathbf{C}_{6}\big|_{\rm MLG}
=\mathrm{diag}\!\Bigl(\tfrac{v_{10}^{5}}{16807},\,-\tfrac{v_{10}^{4}}{2401},\,
\tfrac{v_{10}^{3}}{343},\,-\tfrac{v_{10}^{2}}{49},\,\tfrac{v_{10}}{7},\,-1\Bigr)
=\mathrm{diag}\bigl((-v_{10}/7)^{6-\ell}\bigr)_{\ell=1}^{6}.
\end{equation}

\paragraph{Partition function and amplitudes.}
Proceeding exactly as at $(2,5)$ (App.~\ref{app:25-FMx}), the partition function, one- and two-point amplitudes are
\begin{equation}
Z_{0}=-\frac{v_{10}^{9}}{37059435},\qquad
Z_{1}=\Bigl(-\tfrac{v_{10}^{7}}{588245},0,0\Bigr),\qquad
Z_{12}=\mathrm{diag}\!\Bigl(-\tfrac{v_{10}^{5}}{12005},\,
-\tfrac{v_{10}^{3}}{147},\,-v_{10}\Bigr),
\end{equation}
(off-diagonal two-point entries vanish by parity), and the fusion-allowed three-point amplitudes are
\begin{align}
Z_{123}[1,1,1]&=-\tfrac{v_{10}^{3}}{343}, &
Z_{123}[1,2,2]&=-\tfrac{v_{10}}{7}, &
Z_{123}[1,3,3]&=-\tfrac{7}{v_{10}}, \nonumber\\
Z_{123}[2,2,3]&=-\tfrac{7}{v_{10}}, &
Z_{123}[2,3,3]&=-\tfrac{49}{v_{10}^{2}}, &
Z_{123}[3,3,3]&=-\tfrac{343}{v_{10}^{3}},
\end{align}
all fusion-forbidden entries ($[1,1,2],[1,1,3],[1,2,3],[2,2,2]$) vanishing automatically in FM-$x$.

\paragraph{Universal three-point ratio.}
With $R[n_{1},n_{2},n_{3}]=(Z_{123})^{2}Z_{0}/(Z_{12}Z_{12}Z_{12})$ the six inequivalent fusion-allowed values are
\begin{equation}
R^{\rm FM\text{-}x}=\Bigl\{\tfrac{25}{63},\,\tfrac17,\,\tfrac{1}{63},\,
\tfrac{1}{35},\,\tfrac{1}{105},\,\tfrac{1}{315}\Bigr\}
\;\text{for}\;
(1,1,1),(1,2,2),(1,3,3),(2,2,3),(2,3,3),(3,3,3),
\end{equation}
in agreement with the CFT prediction $R^{\rm CFT}=(p-2n_{1})(p-2n_{2})(p-2n_{3})/\bigl((p-2)p(p+2)\bigr)\, \theta(n_{1},n_{2},n_{3})$.

\subsection*{FM-$y$}\label{app:27-FMy}

In FM-$y$ at $(2,7)$ the Frobenius manifold is $A_{q-1}=A_{1}$, with $Q(y)=y^{2}+u_{1}$. The active couplings are $t_{m,1}$ with $pn<qm$, i.e.\ $m\in\{4,5,6\}$, mapped to FM-$x$ by the Kac dictionary $t_{m,1}^{\rm FM\text{-}y}\!\leftrightarrow\!t_{1,p-m}^{\rm FM\text{-}x}$: the cosmological $t_{6,1}\!\leftrightarrow\!t_{1,1}$, and $t_{5,1}\!\leftrightarrow\!t_{1,2}$, $t_{4,1}\!\leftrightarrow\!t_{1,3}$. The action is
\begin{equation}
S=-\tfrac{1}{120}u_{1}^{5}-u_{1}t_{4,1}-\tfrac12 u_{1}^{2}t_{5,1}
-\tfrac16 u_{1}^{3}t_{6,1},
\end{equation}
with flat-coordinate substitution $u_{1}\to v_{1}$ and string-equation solution for the cosmological coupling $t_{6,1}\to-(v_{10}^{4}+24t_{4,1}+24v_{10}t_{5,1})/(12v_{10}^{2})$, i.e.\ $t_{6,1}\to-v_{10}^{2}/12$ at the physical background $t_{4,1}=t_{5,1}=0$. The $A_{1}$ structure constant is the scalar $-1$ and the partition function is
\begin{equation}
Z_{0}=-\frac{v_{10}^{9}}{45360}.
\end{equation}

\paragraph{Resonance and physical correlators.}
As at $(2,5)$, FM-$y$ and SC-$y$ are two descriptions of the same $y$-side KdV tau-function: the $A_{1}$ free energy reproduces the $y$-side $t$-basis amplitudes of \S\ref{app:27-SCy}, and the physical multipoint correlators are obtained from it by the resonance map $\tau\!\to\!t(\tau)$ of \eqref{eq:taudef}. At $s=3$ this map is nonlinear: it contains the single quadratic term $\tfrac12\tau_{1}^{2}$ in $t_{3}$, eq.~\eqref{eq:res-27}, and the per-insertion factor $\varkappa=-2$ is required to cancel the surviving fusion-forbidden $(1,1,3)$ amplitude, eq.~\eqref{eq:A03-27}. The resulting amplitudes, signed Verlinde matrices \eqref{eq:verl-27-fixed} and universal ratios
\begin{equation}
R^{\rm FM\text{-}y}=\Bigl\{\tfrac{25}{63},\,\tfrac17,\,\tfrac{1}{63},\,
\tfrac{1}{35},\,\tfrac{1}{105},\,\tfrac{1}{315}\Bigr\}
\;\text{for}\;
(1,1,1),(1,2,2),(1,3,3),(2,2,3),(2,3,3),(3,3,3),
\end{equation}
therefore coincide with SC-$y$ and FM-$x$. (Being a Frobenius normalisation, FM-$y$ itself carries $\varkappa=1$, and the factor $\varkappa=-2$ enters only through the identification of its $y$-side data with the spectral-curve amplitudes of \S\ref{app:27-SCy}, where it performs the $(1,1,3)$ cancellation.)

\subsection*{SC-$y$}\label{app:27-SCy}

\paragraph{Spectral curve.} At $(2,7)$ ($\pp=7$, $s=3$) the standard
curve \eqref{eq:SCy-defn} is
\begin{equation}
x(z)=4u_{0}z^{2}-2u_{0},\qquad
y(z)=2u_{0}^{7/2}\bigl(64z^{7}-112z^{5}+56z^{3}-7z\bigr),\qquad
x'(z)=8u_{0}z,
\end{equation}
two-to-one branched at $z=0$, $\sigma:z\mapsto-z$, with $P(z^{2})=64z^{6}-112z^{4}+56z^{2}-7$.

\paragraph{Inputs and $\omega_{0,3}$.}
With $\omega_{0,1}=y\,dx$ and $\omega_{0,2}=dz_{1}dz_{2}/(z_{1}-z_{2})^{2}$, one step of \eqref{eq:EO-recursion} gives
\begin{equation}
\omega_{0,3}=\frac{C_{7}}{z_{1}^{2}z_{2}^{2}z_{3}^{2}}\,,
\qquad C_{7}=-\frac{1}{112\,u_{0}^{9/2}}\,.
\end{equation}

\paragraph{Bare $t$-basis amplitudes.}
With all branches taken at $z<0$, the residue prescription \eqref{eq:SCy-kpderiv} gives
\begin{align}
Z_{1}^{(t)}&=\Bigl(\tfrac{2u_{0}^{7}}{5},\ -\tfrac{7u_{0}^{6}}{24},\ 0\Bigr),
\qquad
Z_{12}^{(t)}=
\begin{pmatrix}
-\tfrac{9u_{0}^{5}}{20} & \tfrac{3u_{0}^{4}}{8} & -\tfrac{u_{0}^{3}}{2}\\[3pt]
\tfrac{3u_{0}^{4}}{8} & -\tfrac{u_{0}^{3}}{3} & \tfrac{u_{0}^{2}}{2}\\[3pt]
-\tfrac{u_{0}^{3}}{2} & \tfrac{u_{0}^{2}}{2} & -u_{0}
\end{pmatrix},\\[4pt]
Z_{123}^{(t)}&:\quad
[111]=-\tfrac{27u_{0}^{3}}{112},\;\;
[112]=\tfrac{9u_{0}^{2}}{56},\;\;
[113]=-\tfrac{9u_{0}}{56},\;\;
[122]=-\tfrac{3u_{0}}{28},\;\;
[123]=\tfrac{3}{28},\nonumber\\
&\qquad
[133]=-\tfrac{3}{28u_{0}},\;\;
[222]=\tfrac{1}{14},\;\;
[223]=-\tfrac{1}{14u_{0}},\;\;
[233]=\tfrac{1}{14u_{0}^{2}},\;\;
[333]=-\tfrac{1}{14u_{0}^{3}},\nonumber
\end{align}
the remaining entries by symmetry. The per-insertion factor \eqref{eq:kappa-value} acts as $\varkappa^{2}$ on $Z_{12}^{(t)}$ and $\varkappa^{3}$ on $Z_{123}^{(t)}$. The cosmological tadpole gives $\mathcal Z_{0}=14u_{0}^{9}/45$, taken positive.

\paragraph{Resonance map.}
The resonance \eqref{eq:taudef} reads
\begin{equation}\label{eq:res-27}
t_{1}=7u_{0}^{2}\tau_{1},\qquad
t_{2}=7u_{0}^{3}\tau_{2},\qquad
t_{3}=7u_{0}^{4}\bigl(\tau_{3}+\tfrac12\tau_{1}^{2}\bigr),
\end{equation}
the single nonlinear term being $\tfrac12\tau_{1}^{2}$ in $t_{3}$. At $\tau_{1}=-\tfrac12$, $\tau_{i>1}=0$ the times are $(t_{1}^{0},t_{2}^{0},t_{3}^{0})=(-\tfrac72u_{0}^{2},\,0,\,\tfrac78u_{0}^{4})$, matching \eqref{kdvdef}.

\paragraph{Physical amplitudes and selection rules.}
With the resonance \eqref{eq:res-27}, the per-insertion factor \eqref{eq:kappa-value} and the singular projection \eqref{eq:25-sing-rule},
\begin{equation}\label{eq:A03-27}
A^{0}_{3}(k_{1},k_{2},k_{3})
=\varkappa^{3}\,u_{0}^{-(k_{1}+1)-(k_{2}+1)-(k_{3}+1)}\,
\frac{\partial^{3}F_{0}}
     {\partial\tau_{k_{1}}\partial\tau_{k_{2}}\partial\tau_{k_{3}}}
\Big|_{\mathrm{sing}},
\end{equation}
the $\tau$-derivative built from the $t$-basis data by the chain rule \eqref{eq:chain-rule}. The single fusion-forbidden entry surviving the projection, $(1,1,3)$, cancels: \eqref{eq:chain-rule} gives $\partial^{3}F_{0}/\partial\tau_{1}^{2}\partial\tau_{3} =\varkappa^{2}\,(\varkappa D+M)$ with direct part $D=-\tfrac{49}{2}u_{0}^{9}$ and resonance part $M=-49u_{0}^{9}$, so that
\begin{equation}
\varkappa D+M=(-2)\bigl(-\tfrac{49}{2}u_{0}^{9}\bigr)-49u_{0}^{9}=0 .
\end{equation}
Hence $A^{0}_{3}(1,1,3)=0$, in agreement with the Frobenius side.

\paragraph{Signed Verlinde matrices.}
Normalising by $\pp^{2}/2=49/2$ and setting $u_{0}=1$, the three slices $A^{0}_{3}(\cdot,\cdot,k_{3})$ are
\begin{equation}\label{eq:verl-27-fixed}
\begin{pmatrix}1&0&0\\0&1&0\\0&0&1\end{pmatrix},\quad
\begin{pmatrix}0&1&0\\1&0&1\\0&1&-1\end{pmatrix},\quad
\begin{pmatrix}0&0&1\\0&1&-1\\1&-1&1\end{pmatrix},
\qquad k_{3}=1,2,3,
\end{equation}
which are exactly the signed Verlinde fusion matrices of the $(2,7)$ matter sector. The spurious $\pm1$ at $(1,1,3)$ present without \eqref{eq:kappa-value} is now absent.

\paragraph{Universal three-point ratios.}
With $\mathcal Z_{0}=14u_{0}^{9}/45$ the ratio \eqref{eq:R-def-main} reproduces the CFT values \eqref{eq:R-CFT} on every fusion-allowed triple, with no residual factor:
\begin{equation}
R^{\mathrm{SC}\text{-}y}=\Bigl\{\tfrac{25}{63},\,\tfrac17,\,\tfrac1{63},\,
\tfrac1{35},\,\tfrac1{105},\,\tfrac1{315}\Bigr\}
\;\text{for}\;
(1,1,1),(1,2,2),(1,3,3),(2,2,3),(2,3,3),(3,3,3).
\end{equation}

\subsection*{SC-$x$}\label{app:27-SCx}

\paragraph{Swapped curve and $\omega_{0,3}$.} At $(2,7)$ the swapped curve
\eqref{eq:SCx-defn} is $\check x(z)=2u_0^{7/2}T_7(z)$, $\check y(z)=4u_0z^2-2u_0$, with $\pp-1=6$ ramifications at $\zeta_m=\cos(\pi m/7)$, $m=1,\dots,6$ (the zeros of $T_7'=7U_6$). The genus-zero three-point differential is the separable sum over branch points
\begin{equation}\label{eq:27-SCx-omega03}
\omega_{0,3}(z_1,z_2,z_3)=\sum_{m=1}^{6}
\frac{1}{\check x''(\zeta_m)\,\check y'(\zeta_m)}\,
\prod_{i=1}^{3}\frac{dz_i}{(z_i-\zeta_m)^2}\,.
\end{equation}

\paragraph{Amplitudes.} Pairing \eqref{eq:27-SCx-omega03} and
$\check\omega_{0,1}=\check y\,d\check x$ with the SC-$x$ test functions $\check x^{(\pp-2k)/\pp}/(\pp-2k)$ through \eqref{eq:SCx-kpderiv} (all residues at $z=\infty$) gives the cosmological tadpole $\mathcal Z_0=14u_0^9/45$ and
\begin{equation}
Z_1^{(\check t)}=\Bigl(\tfrac{2u_0^7}{5},\,0,\,0\Bigr),
\qquad
Z_{12}^{(\check t)}=\mathrm{diag}\Bigl(-\tfrac{u_0^5}{5},\,
-\tfrac{u_0^3}{3},\,-u_0\Bigr).
\end{equation}
The three-point amplitudes reproduce the signed Verlinde matrices \eqref{eq:verl-27-fixed} directly: on the swapped curve there is no resonance and no singular projection. The per-insertion factor $\varkappa=-2$ is the same as on SC-$y$ (both are spectral-curve normalisations), but it cancels in the normalised Verlinde matrices $\check A^{0}_{3}/\bigl(\varkappa^{2}(2s+1)^{2}\bigr)$, which therefore appear with no further dressing. The one-point amplitudes of $\phi_{1,2},\phi_{1,3}$, all off-diagonal two-point amplitudes, and every fusion-forbidden three-point amplitude vanish automatically, precisely the entries that need the $(\star)$-projection or the $\varkappa$-cancellation on the SC-$y$ side.

\paragraph{Universal ratios.} With the single net factor of $\varkappa$
from $\check A^{0}_{0}=\varkappa\check Z_{0}$ (exactly as in \eqref{eq:25-R-SCx-formula}), the ratios \eqref{eq:R-def-main} coincide with SC-$y$ and CFT on every allowed triple,
\begin{equation}
R^{\mathrm{SC}\text{-}x}=\Bigl\{\tfrac{25}{63},\,\tfrac17,\,\tfrac1{63},\,
\tfrac1{35},\,\tfrac1{105},\,\tfrac1{315}\Bigr\}
\;\text{for}\;
(1,1,1),(1,2,2),(1,3,3),(2,2,3),(2,3,3),(3,3,3),
\end{equation}
and the two descriptions yield the same physical correlators, here verified from two independent spectral-curve computations. This closes the SC-$y\leftrightarrow$SC-$x$ leg explicitly at $(2,7)$.

\paragraph{The explicit genus-zero kernel.} Computing both free energies in a
single topological recursion convention, the kernel $K$ with $F_0^{\text{SC-}x}=K[F_0^{\text{SC-}y}]$ is, on the genus-zero free energy, a diagonal dilatation of the KdV times modulo the fusion-forbidden $(\star)$ terms:
\begin{equation}\label{eq:27-SCkernel}
F_0^{\text{SC-}y}(t_1,t_2,t_3)
= F_0^{\text{SC-}x}\!\bigl(\tfrac32 t_1,\,t_2,\,t_3\bigr)
\;+\;(\star),
\qquad
\tfrac32=\sqrt{Z_{12}^{(t)}[1,1]\big/Z_{12}^{(\check t)}[1,1]}\,,
\end{equation}
where the difference is verified to vanish on all nine fusion-allowed monomials of degree two and three (the complete two- and three-point data), the $(\star)$ remainder living on the fusion-forbidden positions. The linear terms require a comment: a non-trivial dilatation does not preserve them ($\Lambda_{1}=\tfrac32$ multiplies the cosmological tadpole term, while the tadpoles themselves agree between the two curves, $Z_{1}^{(t)}[1]=Z_{1}^{(\check t)}[1]=\tfrac{2u_{0}^{7}}{5}$). Since a tau-function is defined up to a factor $\exp(\sum_{k}c_{k}t_{k})$, the identity \eqref{eq:27-SCkernel} is understood modulo this linear ambiguity, i.e.\ as a statement about the $n\ge2$ part of $F_{0}$. The analogous $(2,5)$ statement is the trivial dilatation $F_0^{\text{SC-}y}(t)=F_0^{\text{SC-}x}(t)+(\star)$ ($\Lambda=1$), whose $(\star)$ reproduces exactly the non-universal terms of \eqref{eq:25-SC-sumZ0}, the fusion-allowed linear term matching as well. Thus, on the physical (fusion-allowed) sector the exact tau-function transform \eqref{eq:gkm-transform} collapses at genus zero to a simple linear rescaling of the flow times. This is similar to the Frobenius relation \eqref{eq:25-FM-relation}, which is a simple dilatation. The resonance \eqref{eq:taudef} and $\varkappa$ \eqref{eq:kappa-value}, in contrast, act in the complementary $(\star)$ sector and enforce the fusion selection rules.

\subsection*{Summary}\label{app:27-summary}

The four constructions yield the same genus-zero universal data for $(2,7)$. With the per-insertion factor $\varkappa=-2$ on the two spectral-curve sides, the universal three-point ratios of \S\ref{app:27-FMx}, \S\ref{app:27-FMy}, \S\ref{app:27-SCy} and \S\ref{app:27-SCx} coincide, and agree with the CFT values collected in Table~\ref{tab:R-main} on every fusion-allowed triple. Unlike $(2,5)$, the forbidden $(1,1,3)$ amplitude no longer sits at a projected-out power. Its cancellation in SC-$y$/FM-$y$ requires $\varkappa=-2$ through $\varkappa D+M=0$, eq.~\eqref{eq:A03-27}.

\paragraph{Generating functions.}
Collecting the partition function, one-, two- and three-point amplitudes of each corner, the genus-zero free energy about the respective background ($\delta t_{k}=t_{k}-t_{k}^{*}$) reads
\begin{equation}\label{eq:27-F0-master}
F_{0}(t)
=Z_{0}
+\sum_{k}Z_{1}[k]\,\delta t_{k}
+\tfrac12\!\sum_{k,l}Z_{12}[k,l]\,\delta t_{k}\delta t_{l}
+\tfrac16\!\sum_{k,l,m}Z_{123}[k,l,m]\,\delta t_{k}\delta t_{l}\delta t_{m}
+\mathcal O(\delta t^{4}).
\end{equation}
The non-universal terms (those carrying $Z_{1}[2]$, $Z_{12}[1,2]$, $Z_{12}[2,3]$, $Z_{123}[1,1,2]$, $Z_{123}[1,2,3]$, $Z_{123}[2,2,2]$) are marked $(\star)$ and removed by the universality filter, exactly as at $(2,5)$.

\noindent\textbf{SC-$y$} (variables $t_{1},t_{2},t_{3}$; background $t_{1}^{*}=-\tfrac{7u_{0}^{2}}{2}$, $t_{2}^{*}=0$, $t_{3}^{*}=\tfrac{7u_{0}^{4}}{8}$):
\begin{equation}\label{eq:27-F0-SCy}
\begin{aligned}
F_{0}^{\text{SC-}y} ={}& \frac{14u_{0}^{9}}{45}
+\tfrac{2u_{0}^{7}}{5}\bigl(t_{1}+\tfrac{7u_{0}^{2}}{2}\bigr)
-\tfrac{7u_{0}^{6}}{24}\,t_{2}\;\;(\star)\\
&{}-\tfrac{1}{2}\tfrac{9u_{0}^{5}}{20}\bigl(t_{1}+\tfrac{7u_{0}^{2}}{2}\bigr)^{2}
-\tfrac{1}{2}\tfrac{u_{0}^{3}}{3}\,t_{2}^{2}
-\tfrac{1}{2}u_{0}\bigl(t_{3}-\tfrac{7u_{0}^{4}}{8}\bigr)^{2}\\
&{}+\tfrac{3u_{0}^{4}}{8}\bigl(t_{1}+\tfrac{7u_{0}^{2}}{2}\bigr)t_{2}\;\;(\star)
-\tfrac{u_{0}^{3}}{2}\bigl(t_{1}+\tfrac{7u_{0}^{2}}{2}\bigr)\bigl(t_{3}-\tfrac{7u_{0}^{4}}{8}\bigr)
+\tfrac{u_{0}^{2}}{2}\,t_{2}\bigl(t_{3}-\tfrac{7u_{0}^{4}}{8}\bigr)\;\;(\star)\\
&{}+\tfrac{1}{6}\Bigl[-\tfrac{27u_{0}^{3}}{112}\bigl(t_{1}+\tfrac{7u_{0}^{2}}{2}\bigr)^{3}
-\tfrac{1}{14u_{0}^{3}}\bigl(t_{3}-\tfrac{7u_{0}^{4}}{8}\bigr)^{3}\Bigr]
+\tfrac{1}{6}\tfrac{1}{14}\,t_{2}^{3}\;\;(\star)\\
&{}-\tfrac{1}{2}\tfrac{9u_{0}}{56}\bigl(t_{1}+\tfrac{7u_{0}^{2}}{2}\bigr)^{2}\bigl(t_{3}-\tfrac{7u_{0}^{4}}{8}\bigr)
+\tfrac{1}{2}\tfrac{9u_{0}^{2}}{56}\bigl(t_{1}+\tfrac{7u_{0}^{2}}{2}\bigr)^{2}t_{2}\;\;(\star)\\
&{}-\tfrac{1}{2}\tfrac{3u_{0}}{28}\bigl(t_{1}+\tfrac{7u_{0}^{2}}{2}\bigr)t_{2}^{2}
-\tfrac{1}{2}\tfrac{3}{28u_{0}}\bigl(t_{1}+\tfrac{7u_{0}^{2}}{2}\bigr)\bigl(t_{3}-\tfrac{7u_{0}^{4}}{8}\bigr)^{2}\\
&{}-\tfrac{1}{2}\tfrac{1}{14u_{0}}\,t_{2}^{2}\bigl(t_{3}-\tfrac{7u_{0}^{4}}{8}\bigr)
+\tfrac{1}{2}\tfrac{1}{14u_{0}^{2}}\,t_{2}\bigl(t_{3}-\tfrac{7u_{0}^{4}}{8}\bigr)^{2}\\
&{}+\tfrac{3}{28}\bigl(t_{1}+\tfrac{7u_{0}^{2}}{2}\bigr)t_{2}\bigl(t_{3}-\tfrac{7u_{0}^{4}}{8}\bigr)\;\;(\star)
+\mathcal{O}(t^{4}).
\end{aligned}
\end{equation}

\noindent\textbf{FM-$x$} (variables $t_{1,1},t_{1,2},t_{1,3}$; background $t_{1,1}^{*}=v_{10}^{2}/14$, $t_{1,2}^{*}=t_{1,3}^{*}=0$; the non-universal cosmological tadpole $Z_{1}[1]$ and all fusion-forbidden cubic entries are absent after the filter):
\begin{equation}\label{eq:27-F0-FMx}
\begin{aligned}
F_{0}^{\text{FM-}x} ={}& -\frac{v_{10}^{9}}{37059435}
-\tfrac{1}{2}\tfrac{v_{10}^{5}}{12005}\bigl(t_{1,1}-\tfrac{v_{10}^{2}}{14}\bigr)^{2}
-\tfrac{1}{2}\tfrac{v_{10}^{3}}{147}\,t_{1,2}^{\,2}
-\tfrac{1}{2}v_{10}\,t_{1,3}^{\,2}\\
&{}+\tfrac{1}{6}\Bigl[-\tfrac{v_{10}^{3}}{343}\bigl(t_{1,1}-\tfrac{v_{10}^{2}}{14}\bigr)^{3}
-\tfrac{343}{v_{10}^{3}}\,t_{1,3}^{\,3}\Bigr]
-\tfrac{1}{2}\tfrac{v_{10}}{7}\bigl(t_{1,1}-\tfrac{v_{10}^{2}}{14}\bigr)t_{1,2}^{\,2}
-\tfrac{1}{2}\tfrac{7}{v_{10}}\bigl(t_{1,1}-\tfrac{v_{10}^{2}}{14}\bigr)t_{1,3}^{\,2}\\
&{}-\tfrac{1}{2}\tfrac{7}{v_{10}}\,t_{1,2}^{\,2}\,t_{1,3}
-\tfrac{1}{2}\tfrac{49}{v_{10}^{2}}\,t_{1,2}\,t_{1,3}^{\,2}
+\mathcal{O}(t^{4}).
\end{aligned}
\end{equation}

\noindent\textbf{FM-$y$}: the $A_{1}$ Frobenius manifold provides the same $y$-side KdV tau-function as SC-$y$ (it reproduces $Z_{0}=14u_{0}^{9}/45$ and the $t$-basis amplitudes of \S\ref{app:27-SCy}); accordingly its filtered generating function coincides with that of SC-$y$, $F_{0}^{\text{FM-}y,\text{filt}}=F_{0}^{\text{SC-}y,\text{filt}}$, the physical correlators following through the nonlinear resonance \eqref{eq:res-27} and the $\varkappa=-2$ cancellation \eqref{eq:A03-27}.

\noindent\textbf{SC-$x$} (variables $\check t_{1},\check t_{2},\check t_{3}$; background $\check t_{1}^{*}=7u_{0}^{2}/2$, $\check t_{2}^{*}=\check t_{3}^{*}=0$; no $(\star)$ terms appear, every entry marked $(\star)$ on the standard side vanishes automatically on the swapped curve):
\begin{equation}\label{eq:27-F0-SCx}
\begin{aligned}
F_{0}^{\text{SC-}x} ={}& \frac{14u_{0}^{9}}{45}
+\tfrac{2u_{0}^{7}}{5}\bigl(\check t_{1}-\tfrac{7u_{0}^{2}}{2}\bigr)
-\tfrac{1}{2}\tfrac{u_{0}^{5}}{5}\bigl(\check t_{1}-\tfrac{7u_{0}^{2}}{2}\bigr)^{2}
-\tfrac{1}{2}\tfrac{u_{0}^{3}}{3}\,\check t_{2}^{\,2}
-\tfrac{1}{2}u_{0}\,\check t_{3}^{\,2}\\
&{}+\tfrac{1}{6}\Bigl[-\tfrac{u_{0}^{3}}{14}\bigl(\check t_{1}-\tfrac{7u_{0}^{2}}{2}\bigr)^{3}
-\tfrac{1}{14u_{0}^{3}}\,\check t_{3}^{\,3}\Bigr]
-\tfrac{1}{2}\tfrac{u_{0}}{14}\bigl(\check t_{1}-\tfrac{7u_{0}^{2}}{2}\bigr)\check t_{2}^{\,2}
-\tfrac{1}{2}\tfrac{1}{14u_{0}}\bigl(\check t_{1}-\tfrac{7u_{0}^{2}}{2}\bigr)\check t_{3}^{\,2}\\
&{}-\tfrac{1}{2}\tfrac{1}{14u_{0}}\,\check t_{2}^{\,2}\,\check t_{3}
+\tfrac{1}{2}\tfrac{1}{14u_{0}^{2}}\,\check t_{2}\,\check t_{3}^{\,2}
+\mathcal{O}(\check t^{4}).
\end{aligned}
\end{equation}

\paragraph{Relations between the four generating functions.}
On the spectral-curve side, the genus-zero kernel is the diagonal dilatation \eqref{eq:27-SCkernel}, that is $\Lambda=\mathrm{diag}(\tfrac32,1,1)$. On the Frobenius side the analogous identification is a rescaling of the flat couplings (the $(2,7)$ counterpart of \eqref{eq:25-FM-relation}), and FM-$y\leftrightarrow$SC-$y$ is the resonance \eqref{eq:res-27} dressed by $\varkappa=-2$. All four free energies therefore agree, up to these linear rescalings, on the fusion-allowed two- and three-point data. The residual differences reside in the $(\star)$ terms removed by the universality filter.

\clearpage
\bibliographystyle{JHEP} 
\bibliography{bib}

@article{Polyakov:1981rd,
    author = "Polyakov, Alexander M.",
    editor = "Khalatnikov, I. M. and Mineev, V. P.",
    title = "{Quantum Geometry of Bosonic Strings}",
    reportNumber = "Print-81-0351 (LANDAU INST)",
    doi = "10.1016/0370-2693(81)90743-7",
    journal = "Phys. Lett. B",
    volume = "103",
    pages = "207--210",
    year = "1981"
}

@article{Knizhnik:1988ak,
    author = "Knizhnik, V. G. and Polyakov, Alexander M. and Zamolodchikov, A. B.",
    editor = "Khalatnikov, I. M. and Mineev, V. P.",
    title = "{Fractal Structure of 2D Quantum Gravity}",
    reportNumber = "PRINT-88-0812 (LANDAU)",
    doi = "10.1142/S0217732388000982",
    journal = "Mod. Phys. Lett. A",
    volume = "3",
    pages = "819",
    year = "1988"
}

@article{Douglas:1989dd,
    author = "Douglas, Michael R.",
    editor = "Brezin, E. and Wadia, S. R.",
    title = "{Strings in Less Than One-dimension and the Generalized $K^- D^- V$ Hierarchies}",
    reportNumber = "RU-89-51",
    doi = "10.1016/0370-2693(90)91716-O",
    journal = "Phys. Lett. B",
    volume = "238",
    pages = "176",
    year = "1990"
}

@article{Brezin:1990rb,
    author = "Brezin, E. and Kazakov, V. A.",
    title = "{Exactly Solvable Field Theories of Closed Strings}",
    doi = "10.1016/0370-2693(90)90818-Q",
    journal = "Phys. Lett. B",
    volume = "236",
    pages = "144--150",
    year = "1990"
}

@article{Belavin:2014hsa,
    author = "Belavin, A. A. and Belavin, V. A.",
    title = "{Frobenius manifolds, Integrable Hierarchies and Minimal Liouville Gravity}",
    eprint = "1406.6661",
    archivePrefix = "arXiv",
    primaryClass = "hep-th",
    reportNumber = "FIAN-TD-2014-10",
    doi = "10.1007/JHEP09(2014)151",
    journal = "JHEP",
    volume = "09",
    pages = "151",
    year = "2014"
}

@article{Belavin:2014xya,
    author = "Belavin, V.",
    title = "{Correlation Functions in Unitary Minimal Liouville Gravity and Frobenius Manifolds}",
    eprint = "1412.4245",
    archivePrefix = "arXiv",
    primaryClass = "hep-th",
    doi = "10.1007/JHEP02(2015)052",
    journal = "JHEP",
    volume = "02",
    pages = "052",
    year = "2015"
}

@article{Belavin:2014cua,
    author = "Belavin, V.",
    title = "{Unitary Minimal Liouville Gravity and Frobenius Manifolds}",
    eprint = "1405.4468",
    archivePrefix = "arXiv",
    primaryClass = "hep-th",
    reportNumber = "FIAN-TD-2014-09",
    doi = "10.1007/JHEP07(2014)129",
    journal = "JHEP",
    volume = "07",
    pages = "129",
    year = "2014"
}

@article{Belavin:2013nba,
    author = "Belavin, Alexander and Dubrovin, Boris and Mukhametzhanov, Baur",
    title = "{Minimal Liouville Gravity correlation numbers from Douglas string equation}",
    eprint = "1310.5659",
    archivePrefix = "arXiv",
    primaryClass = "hep-th",
    doi = "10.1007/JHEP01(2014)156",
    journal = "JHEP",
    volume = "01",
    pages = "156",
    year = "2014"
}

@article{Moore:1991ir,
    author = "Moore, Gregory W. and Seiberg, Nathan and Staudacher, Matthias",
    title = "{From loops to states in 2-D quantum gravity}",
    reportNumber = "RU-91-11, YCTP-P11-91",
    doi = "10.1016/0550-3213(91)90548-C",
    journal = "Nucl. Phys. B",
    volume = "362",
    pages = "665--709",
    year = "1991"
}

@article{BR,
    author = "Belavin, V. and Rud, Yu.",
    title = "{Matrix model approach to minimal Liouville gravity revisited}",
    eprint = "1502.05575",
    archivePrefix = "arXiv",
    primaryClass = "hep-th",
    reportNumber = "FIAN-TD-2015-03",
    doi = "10.1088/1751-8113/48/18/18FT01",
    journal = "J. Phys. A",
    volume = "48",
    number = "18",
    pages = "18FT01",
    year = "2015"
}

@article{Artemev,
    author = "Artemev, Aleksandr",
    title = "{x-y swap for a (2, 2p+1) minimal string}",
    eprint = "2506.09222",
    archivePrefix = "arXiv",
    primaryClass = "hep-th",
    doi = "10.1103/pd2y-j2x5",
    journal = "Phys. Rev. D",
    volume = "112",
    number = "4",
    pages = "046019",
    year = "2025"
}

@article{Eynard:2007kz,
    author = "Eynard, Bertrand and Orantin, Nicolas",
    title = "{Invariants of algebraic curves and topological expansion}",
    eprint = "math-ph/0702045",
    archivePrefix = "arXiv",
    reportNumber = "SPHT-07-021",
    doi = "10.4310/CNTP.2007.v1.n2.a4",
    journal = "Commun. Num. Theor. Phys.",
    volume = "1",
    pages = "347--452",
    year = "2007"
}

@article{Belavin:2008kv,
    author = "Belavin, A. A. and Zamolodchikov, A. B.",
    title = "{On Correlation Numbers in 2D Minimal Gravity and Matrix Models}",
    eprint = "0811.0450",
    archivePrefix = "arXiv",
    primaryClass = "hep-th",
    reportNumber = "RUNHETC-2008-24",
    doi = "10.1088/1751-8113/42/30/304004",
    journal = "J. Phys. A",
    volume = "42",
    pages = "304004",
    year = "2009"
}

@article{Alexandrov:2022ydc,
    author = "Alexandrov, Alexander and Bychkov, Boris and Dunin-Barkowski, Petr and Kazarian, Maxim and Shadrin, Sergey",
    title = "{A universal formula for the $x-y$ swap in topological recursion}",
    eprint = "2212.00320",
    archivePrefix = "arXiv",
    primaryClass = "math-ph",
    doi = "10.4171/JEMS/1615",
    month = "12",
    year = "2022"
}

@unpublished{Saito,
  author = {Saito, Kyoji},
  title  = {On periods of primitive integrals, {I}},
  note   = {RIMS Preprint},
  year   = {1982}
}

@article{Dubrovin,
    author = "Dubrovin, Boris",
    editor = "Francaviglia, M. and Greco, S.",
    title = "{Geometry of 2-D topological field theories}",
    eprint = "hep-th/9407018",
    archivePrefix = "arXiv",
    reportNumber = "SISSA-89-94-FM",
    doi = "10.1007/BFb0094793",
    journal = "Lect. Notes Math.",
    volume = "1620",
    pages = "120--348",
    year = "1996"
}

@article{Aleshkin:2016snp,
    author = "Aleshkin, Konstantin and Belavin, Vladimir",
    title = "{On the construction of the correlation numbers in Minimal Liouville Gravity}",
    eprint = "1610.01558",
    archivePrefix = "arXiv",
    primaryClass = "hep-th",
    doi = "10.1007/JHEP11(2016)142",
    journal = "JHEP",
    volume = "11",
    pages = "142",
    year = "2016"
}

@article{KHARCHEV_1995,
    author = "Kharchev, S. and Marshakov, A.",
    title = "{On p - q duality and explicit solutions in c {\ensuremath{<}}= 1 2-d gravity models}",
    eprint = "hep-th/9303100",
    archivePrefix = "arXiv",
    reportNumber = "NORDITA-93-20, FIAN-TD-04-93",
    doi = "10.1142/S0217751X95000577",
    journal = "Int. J. Mod. Phys. A",
    volume = "10",
    pages = "1219--1236",
    year = "1995"
}

@article{Kharchev:1992gi,
    author = "Kharchev, S. and Marshakov, A. and Mironov, A. and Morozov, A. and Zabrodin, A.",
    title = "{Towards unified theory of 2-D gravity}",
    eprint = "hep-th/9201013",
    archivePrefix = "arXiv",
    reportNumber = "FIAN-TD-10-91, ITEP-M-9-91",
    doi = "10.1016/0550-3213(92)90521-C",
    journal = "Nucl. Phys. B",
    volume = "380",
    pages = "181--240",
    year = "1992"
}

@misc{DSV2026,
      title={Resonance transformations for the $(2,2p+1)$ minimal string via $x-y$ swap: a proof of Artemev's conjecture}, 
      author={Kornelis Dekinga and Sergey Shadrin and Erik Verlinde},
      year={2026},
      eprint={2606.04854},
      archivePrefix={arXiv},
      primaryClass={hep-th},
      url={https://arxiv.org/abs/2606.04854}, 
}

@article{CEO,
    author = "Chekhov, Leonid and Eynard, Bertrand and Orantin, Nicolas",
    title = "{Free energy topological expansion for the 2-matrix model}",
    eprint = "math-ph/0603003",
    archivePrefix = "arXiv",
    reportNumber = "SPHT-T06-016, ITEP-TH-05-06",
    doi = "10.1088/1126-6708/2006/12/053",
    journal = "JHEP",
    volume = "12",
    pages = "053",
    year = "2006"
}

@article{BCGLS,
    author = {Borot, Ga{\"e}tan and Charbonnier, S{\'e}verin and Garcia-Failde, Elba and Leid, Felix and Shadrin, Sergey},
    title = "{Functional relations for higher-order free cumulants}",
    eprint = "2112.12184",
    archivePrefix = "arXiv",
    primaryClass = "math.OA",
    month = "12",
    year = "2021"
}

@article{Hock,
    author = "Hock, Alexander",
    title = "{A simple formula for the x-y symplectic transformation in topological recursion}",
    eprint = "2211.08917",
    archivePrefix = "arXiv",
    primaryClass = "math-ph",
    doi = "10.1016/j.geomphys.2023.105027",
    journal = "J. Geom. Phys.",
    volume = "194",
    pages = "105027",
    year = "2023"
}

@article{AMMP,
    author = "Alexandrov, A. S. and Mironov, A. and Morozov, A. and Putrov, P.",
    title = "{Partition Functions of Matrix Models as the First Special Functions of String Theory. II. Kontsevich Model}",
    eprint = "0811.2825",
    archivePrefix = "arXiv",
    primaryClass = "hep-th",
    reportNumber = "FIAN-TD-22-08, ITEP-TH-49-08, IHES-P-08-52",
    doi = "10.1142/S0217751X09046278",
    journal = "Int. J. Mod. Phys. A",
    volume = "24",
    pages = "4939--4998",
    year = "2009"
}

@article{ABDKSKP,
    author = "Alexandrov, Alexander and Bychkov, Boris and Dunin-Barkowski, Petr and Kazarian, Maxim and Shadrin, Sergey",
    title = "{KP integrability through the $x-y$ swap relation}",
    eprint = "2309.12176",
    archivePrefix = "arXiv",
    primaryClass = "math-ph",
    doi = "10.1007/s00029-025-01035-8",
    journal = "Selecta Math.",
    volume = "31",
    number = "2",
    pages = "42",
    year = "2025"
}

\end{document}